\makeatletter\renewcommand{\ALG@name}{Protocol}\makeatother
\definecolor{orange}{rgb}{1,0.5,0}
\newtheoremstyle{mystyle}{}{}{\itshape}{}{\bfseries}{.}{6 pt}{\thmname{#1}\thmnumber{ #2}\thmnote{ {\bfseries(#3)}}}
\theoremstyle{mystyle} 
\newtheorem{theorem}{Theorem}
\newtheorem{lemma}{Lemma}
\newtheorem{observation}{Observation}
\newtheorem{definition}{Definition}
\begin{document}

\title{Population Protocols with  Faulty  Interactions: the Impact of a Leader\footnotemark[1]}
\author{Giuseppe Antonio Di Luna\footnotemark[2]  \hspace{5mm} Paola Flocchini\footnotemark[2] \hspace{5mm} 
Taisuke Izumi\footnotemark[3] \\ Tomoko Izumi\footnotemark[4] \hspace{5mm} Nicola Santoro\footnotemark[5]
\hspace{5mm} Giovanni Viglietta\footnotemark[2]}
\date{}

\clearpage\maketitle
\thispagestyle{empty}
\renewcommand{\thefootnote}{\fnsymbol{footnote}}
\footnotetext[1]{This work was supported in part by  NSERC Discovery Grants,  and by
KAKENHI No.\ 15H00852 and 25289227.}
\footnotetext[2]{University of Ottawa. E-mails: \{gdiluna, paola.flocchini, gvigliet\}@uottawa.ca.}
\footnotetext[3]{Nagoya Institute of Technology, Gokiso-cho, 
Showa-ku, Nagoya, Aichi, 466-8555, Japan. E-mail: t-izumi@nitech.ac.jp.}
\footnotetext[4]{Ritsumeikan University. E-mail: izumi-t@fc.ritsumei.ac.jp.}
\footnotetext[5]{Carleton University. E-mail: santoro@scs.carleton.ca.}

\renewcommand{\thefootnote}{\arabic{footnote}}
\setcounter{footnote}{0}


\begin{abstract}
\noindent We consider the problem of simulating traditional population protocols under weaker models of communication, which include one-way interactions (as opposed to two-way interactions) and omission faults (i.e., failure by an agent to read its partner's state during an interaction), which in turn may be detectable or undetectable. We focus on the impact of a leader, and we give a complete characterization of the models in which the presence of a unique leader in the system allows the construction of simulators: when simulations are possible, we give explicit protocols; when they are not, we give proofs of impossibility. Specifically, if each agent has only a finite amount of memory, the simulation is possible only if there are no omission faults. If agents have an unbounded amount of memory, the simulation is  possible as long as omissions are detectable. If an upper bound on the number of omissions involving the leader is known, the simulation is always possible, except in the one-way model in which one side is unable to detect the interaction.
\end{abstract}

\section{Introduction} 

\subsection{Framework}

Consider a system of {\em interacting computational  entities}, called agents,  whose interaction is  however not under their control 
but decided by an external scheduler.
Such are for example systems of   wireless  mobile entities  where
two entities  can interact (i.e., exchange information)  when their movement 
brings them  into communication range of each other, but 
  their  movements, and thus  their interactions,  are  unpredictable.
Systems satisfying this condition, sometimes  called {\em opportunistic mobility} or {\em passive mobility}, have been 
extensively examined under  variety of assumptions, especially  within the  context of distributed computing
in  highly dynamic networks and time-varying graphs (for recent surveys see \cite{CaFQS12,popbook}).

In particular, in the  {\em population protocol} model ({\sf PP}),  introduced in the seminal paper \cite{first}, the entities
are assumed to be finite-state and anonymous (i.e., identical), execute the same protocol, and interactions are always between pairs of agents. 
The roles of the two agents involved  in an interaction  are asymmetric: one agent is considered the {\em starter} and the other is the {\em reactor}. Still,
 the  communication is  
 {\em two-way}: 
each agent  receives  the state of the other  and  
executes the protocol to update its own state  based on the received information and its own state. Furthermore, in the selection
of the occurrences of the interactions,
 the scheduler is constrained to satisfy some  fairness assumption.
 
The restricted computational universe defined by the basic assumptions of  {\sf PP} has been subsequently expanded in
an attempt  to overcome the inherent computability  limitations and to examine the computational impact of  factors such as
non-constant memory 
(e.g.,  \cite{dalistar2,dalistar,passivemachine}),   presence of  a leader (e.g., \cite{fastleader}), 
storage of information  on edges  (e.g., \cite{mediated,mediated3,mediated2}), etc.

In all these models, including the original one, the interaction is assumed to be fault-free.
 An immediate important question is  what happens  if  interactions are subject to {\em  failures}.

Very little is know in this regard. An insight comes from the study of the so-called {\em one-way interaction models}~\cite{oneway}, 
where the starter of an interaction  is not  able to see 
the state of  the reactor ({\em immediate transmission}), or it is not even able to detect that the
interaction has taken place  ({\em immediate observation}). 
This study showed that, under one-way interactions, the computational power of  the agents  is strictly
weaker  than with the usual bidirectional interactions.  In particular, if  the interactions are not detectable by the starter (i.e., immediate observation),
the agents can compute  only the threshold predicates \cite{oneway}.  

The one-way interaction models capture a class 
of  {\em permanent omission failures}, those occurring at the starter's side.
Clearly, there are many more types of omission failures, such as those occurring at the reactor's side and, 
more insidious, those whose occurrence is dynamic and unpredictable. And of course,
for each of these types there are different variations, depending on the kind of fault detection assumed.

The complete range of {\em dynamic omission failures} has been classified in \cite{DiFIISV16}, where the following general question 
was posed: under what additional system capabilities is it possible to correctly execute every traditional two-way population protocol
in spite of dynamic omission failures?  More specifically, under what conditions (if any)
it is possible to  {\em simulate} the execution of every two-way population protocol for a given class of omission failures?
The simulator   should be a  population protocol  that,  in each execution   in the model defined by the considered  class of omissions,
 produces a correct execution of   any  traditional
 two-way population protocol  $\mathcal P$   given as input,
  regardless of the nature of $\mathcal P$ and the constraints its execution might have;
  and it does so unobtrusively at each agent,  interacting with $\mathcal P$ only by observing its internal state, 
  and  providing to it the internal state of another agent.
 In other words, a simulator provides an  interface between the simulated protocol $\mathcal P$ 
 and the physical communication layer, 
  giving the system the illusion of being in a fault-free  two-way environment.

The existence of simulators is important in scenarios in which we do not only concern ourselves with the final output of a population protocol, but also with the \emph{execution} that leads to the result. We may want, for instance, to guarantee that our simulating agents enter some critical states exactly as many times as they would if they were actually executing the protocol that is being simulated.

The existence of fault-tolerant one-way simulators of two-way protocols has been investigated in  \cite{DiFIISV16} in terms of the amount {\em memory} required by the agents
to perform such simulations, and a variety of models and results were established.  It is shown that, with no \emph{a-priori} knowledge, the
simulation of two-way protocols in the presence of omissions is impossible
even if the agents have infinite memory. In the weakest models investigated, this impossibility holds even if the number of omission failures in each execution is limited to one.
On the other hand, it is also shown that simulation is possible if agents have unique IDs or the total number of agents is known.
Moreover, in some restricted models, simulation is possible when an upper bound on the number of omission faults is known.

In this paper we continue this general line of research  and investigate how the presence in the system 
of a distinguished agent, a {\em leader}, can impact the capability of the system to
tolerate dynamic omission failures. More precisely, we study the possibility and impossibility of
  simulation of two-way protocols with the aid of a leader,  with respect to the different classes of omission failures and one-way interactions.

\subsection{Main Contributions}
As in \cite{DiFIISV16}, we consider 
 all the computationally distinct
models that arise  
from the introduction of omission faults and/or 
one-way interactions in two-way  protocols: {\sf TW}, {\sf IT}, {\sf IO},  {\sf T$_i$} ($i=1,2,3,$),  and {\sf I$_j$} ($j=1,2,3,4,$);
 see Figure~\ref{figure:modcompress}, where the transition function $\delta$, detailed in Section~\ref{sec:model}, uniquely identifies each model.
In particular, {\sf TW} refers to two-way protocols without omissions; {\sf IT} and {\sf IO} refer to the one-way models {\em immediate transmission} and {\em immediate observation}, introduced in \cite{oneway};  the {\sf T$_i$}'s  and {\sf I$_i$}'s refer to the distinct two-way and one-way models with omissions,  respectively. 

We consider  two types of  omission adversaries: informally, a ``malignant'' one ({\bf UO}), which is able to arbitrarily insert omission faults into ``globally fair'' sequences of interactions, and a ``benign" one ({\bf $\diamond$NO}), which inserts some omission faults, but eventually stops. To make our results stronger, we always assume the benign adversary in in the impossibility proofs and the malignant one in the possibility proofs.
  
We study the negative impact that omissions have on computability, and we show that the  simulation of two-way protocols is impossible even with the aid of a leader  (Theorem \ref{th:ledmemfinimp}), assuming that the amount of memory is bounded.

On the other hand, we show that  the presence of  both a leader and infinite memory on each agent makes the simulation possible  in the weak  intermediate one-way models {\sf I$_1$} and {\sf I$_2$} (Theorem~\ref{th2:leaderinfmemsim}), and thus in all the upper  models of Figure~\ref{figure:modcompress}. The fact that this possibility does not apply to {\sf IO} and {\sf T$_1$} is not accidental: indeed we prove that, for these two models,
  the simulation is impossible even with both a leader and infinite memory, even against the  benign omission adversary (Theorem~\ref{th:infmemleader}).

Finally, we study what happens when a bound on the omission failures involving the leader is known, and essentially we show that simulators exists for models  {\sf I$_1$} and {\sf I$_2$} (Theorem~\ref{upbound:1}) and model {\sf T$_1$} (Theorem~\ref{upbound:2}), and these imply the possibility of simulations in all other omissive models. 

For non-omissive models,
we  show that two-way simulation is possible
 in the {\sf IT} model (Theorem~\ref{leader:token}).  In light of the fact  that 
with constant memory, in absence of additional capabilities,  {\sf IT} protocols are strictly less powerful than {\sf TW} (see~\cite{DiFIISV16}), our results show that this computational gap can
be overcome by using a leader.

Our main results are summarized in Figure~\ref{id:algorwq2}, where white blobs represent possibilities, and gray blobs impossibilities. As a consequence of these results, we have a complete characterization
 of the feasibility of simulations  with respect to the presence of a leader.


\subsection{Related Work}

Starting with the seminal paper~\cite{first2}, there have been  extensive  investigations  on population protocols (e.g., see~\cite{computability,infinitepopulation,scheduler,popbook,probabilistic,popchecm,ids}). In order to overcome the inherent computability restrictions of the model, several extensions have been proposed. 
For example, endowing each agent with non-constant memory~\cite{dalistar2,dalistar,passivemachine}, assuming the presence of  a leader \cite{fastleader}, allowing 
 a certain amount of information to be stored on the edges of the ``communication graph''~\cite{mediated,mediated3,mediated2}, etc.

The possibility of reliable computations in {\sf PP}, first considered in   \cite{firstfault}, has been studied only with respect to processors' faults, and the basic model has necessarily been expanded.
In \cite{fault} it has been shown how to compute functions tolerating ${\cal O}(1)$ crash stops and transient failures,
assuming that  the number of failures is bounded and known.
In \cite{majbiz}  the majority problem under ${\cal O}(\sqrt{n})$ Byzantine failures,  
 assuming a fair probabilistic scheduler, has been studied. 
In \cite{bizfault} unique IDs are assumed, and it is shown how to compute functions tolerating a bounded number of Byzantine faults, under the assumption that Byzantine agents cannot forge IDs. 
Self-stabilizing solutions have been devised for specific problems such as: 
 leader election,  assuming knowledge of the system size  and a non-constant number of states~\cite{izumile},
                       or assuming a leader detection oracle~\cite{leaderelection};
 counting,   assuming the presence of  a   leader   \cite{spaceoptcounting}. 
 Moreover,  in \cite{Beauquier20114247}  a self-stabilizing transformer for general protocols has been studied in a slightly different model and under the assumption of unbounded memory and a leader. 

Finally, to the best of our knowledge,  the one-way model, without omissions, has been studied   only in \cite{oneway}, where it is shown that {\sf IT} and {\sf IO}, when equipped with constant memory, can compute a set of functions that is strictly included in that of {\sf TW}.  
The omission models that we consider have been introduced for the first time in \cite{DiFIISV16}, where a characterisation of what can be simulated without a leader is given. Our paper complements and enriches the results of \cite{DiFIISV16}, showing what additional power is obtained assuming the presence of a leader.  


%

\section{Model and Terminology}
\label{sec:model}

In this section we briefly define the computation model, the notion of omission, and the notion of simulator. Due to space constraints, we do not include all the formal definitions, which can be found in~\cite{DiFIISV16}.

\subsection{Interacting Entities}
We consider a system consisting of a set $A = \{a_{1},\ldots,a_{n}\}$ of interacting computational entities, called {\em agents}. 
Each {\em  interaction} involves only  two agents
with asymmetric roles: one agent is the  \emph{starter} and the other is the \emph{reactor}. Interactions occur at discrete times, and at every ``time unit'' exactly one interaction occurs. The starter and the reactor of each interaction are chosen by an external ``adversarial scheduler'' in a ``globally fair'' way (see~\cite{DiFIISV16} for details).

 When two agents interact, they exchange information and perform a local computation according to the same protocol $\mathcal{P}$.
 A protocol is a pair $\mathcal{P} = (Q_{\mathcal{P}},\delta_{\cal P})$, where
 $Q_{\mathcal{P}}$ is  a set of local states and  $\delta_{\cal P}\colon Q_{{\cal P}} \times Q_{{\cal P}} \rightarrow 
Q_{{\cal P}} \times Q_{{\cal P}}$ is the transition function 
defining the states of the two interacting agents at the end of their local 
computation. Some elements of $Q_{\mathcal{P}}$ are labeled as ``initial states''; when the execution of the protocol begins, all agents have (any combination of) initial states.  With a small abuse of notation, and when no ambiguity arises, we will use the same literal (e.g., 
$a_i$) to indicate both an agent and its local
state. A \emph{configuration} of $\mathcal P$ is a multiset of local states of $\mathcal P$.

We can model the presence of a \emph{leader} in the system by stipulating that, in every initial configuration, there is exactly one agent in a distinguished state (or set of states).

Depending on the conditions imposed on the transition function, 
three main models of interactions have been  identified: the standard {\em two-way} model
and  the one-way models,
\emph{immediate transmission}  and  \emph{immediate observation}, presented in \cite{oneway}.

 {\bf Two-Way Interaction Model} (\textsf{TW}).  The state transition function consists of
two functions $f_s \colon Q_{\mathcal{P}} \times Q_{\mathcal{P}} \to Q_{\mathcal{P}}$ and 
$f_r\colon Q_{\mathcal{P}} \times Q_{\mathcal{P}} \to Q_{\mathcal{P}}$, one for the starter and the other for the receiver respectively, with
$\delta_{\mathcal{P}}(a_s,a_r) = (f_s(a_s,a_r), f_r(a_s,a_r))$.
\smallskip

 {\bf Immediate Transmission Model} (\textsf{IT}).  The state transition function consists of 
two functions $g \colon Q_{\mathcal{P}} \to Q_{\mathcal{P}}$ and 
$f\colon Q_{\mathcal{P}} \times Q_{\mathcal{P}} \to Q_{\mathcal{P}}$, with
$\delta_{\mathcal{P}}(a_s,a_r) = (g(a_s), f(a_s,a_r))$.
Note that, in the {\sf IT} interaction model, the starter does not read the state of the reactor but it explicitly detects the interaction, as it applies function $g$ to its own state. 
\smallskip

 {\bf Immediate Observation Model} (\textsf{IO}). The state transition function has the form 
$\delta_{\mathcal{P}}(a_s, a_r) = (a_s, f(a_s,a_r))$.
 Note that, in the {\sf IO} model,  there is no  detection of the interaction (or proximity) by the starter.

\subsection{Omissions}

An {\em omission} is a fault involving a single interaction. In an omissive interaction, an agent does not receive any information about the state of the other.
If omissions can occur in the system,  then transition functions become more general relations. 

\smallskip
 {\bf Two-Way Omissive Models.} In the most general omissive model, {\sf T$_3$}, the transition relation has the form
$$\delta(a_s,a_r)=\{(f_s(a_s,a_r),f_r(a_s,a_r)),\ (o(a_s),f_r(a_s,a_r)),\ (f_s(a_s,a_r),h(a_r)),\ (o(a_s),h(a_r))\}.$$ 
The first pair is the outcome of an interaction when no omission is present; the other three pairs represent all possible outcomes when there is an omission: respectively, an omission on the starter's side, on the reactor's side, and on both sides. The functions $o$ and $h$ represent the detection capabilities of each agent: if one of these is the identity, then omissions are \emph{undetectable} on the respective side. This gives rise to the weaker models {\sf T$_2$} and {\sf T$_1$} depicted in Figure~\ref{figure:modcompress} (see~\cite{DiFIISV16} for more details).
\smallskip

 {\bf One-Way Omissive Models.}
These models are defined by the 
 transition relation
$$\delta(a_s,a_r)=\{(g(a_s),f(a_s,a_r)),\ (o(a_s),h(a_r))\}.$$
 The first pair is the outcome of an interaction when no omission is present, and the second pair when there is an omission. 
Note that the {\sf IO} model corresponds to the case in which $g$ is the identity function and there are no omissions. Once again, omissions are undetectable starter-side if $o$ is the identity function or if $o=g$. Moreover, if $h=g$, the reactor has detected the \emph{proximity} of another agent, but is unable to read its state or even determine who is the starter and who is the reactor. Collectively, these variations give rise to models {\sf I$_1$} to {\sf I$_4$} in Figure~\ref{figure:modcompress}. Other combinations of omissions and detections are possible, but they are provably equivalent to some of the aforementioned ones (see~\cite{DiFIISV16} for more details).

 Omissions are introduced by an adversarial entity. We consider two types of adversaries:\\
(1) the {\em Unfair Omissive Adversary} ({\bf UO}), which arbitrarily inserts omissive interactions in any execution, and\\
(2) the {\em Eventually Non-Omissive Adversary} ($\diamond${\bf NO}), which can only insert finitely many omissions in an execution.

\begin{figure}
\begin{center}
\includegraphics[scale=0.65]{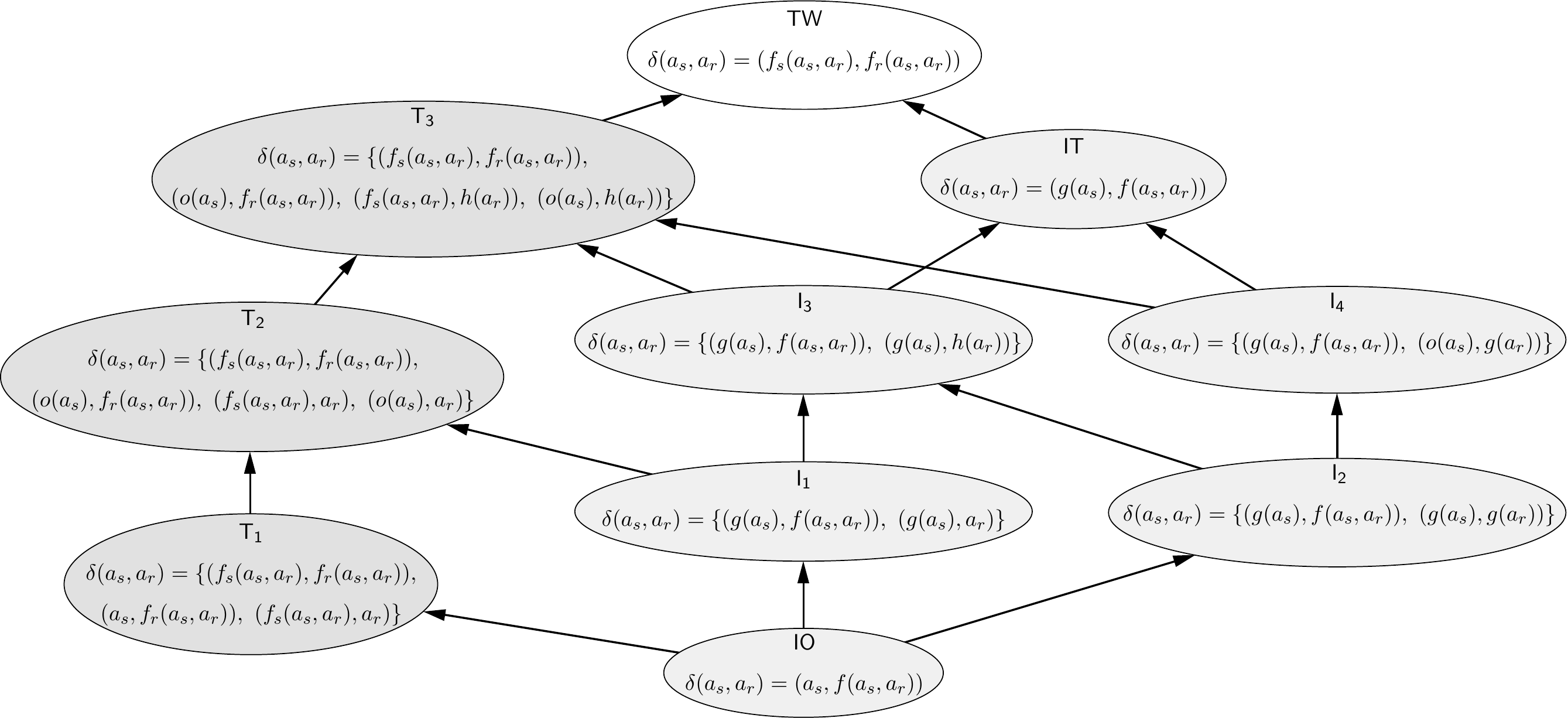}
\end{center}
\caption{Interaction models (up to equivalence) and their computational relationships. An arrow between two blobs indicates that the class of solvable problems in the source blob is included in that of the destination blob. The models on the left,  {\sf T$_1$}, {\sf T$_2$}, {\sf T$_3$}, are the two-way models with omissions. The models on the right, {\sf I$_1$}, {\sf I$_2$}, {\sf I$_3$}, {\sf I$_4$},
are the one-way models with omissions. \label{figure:modcompress}}
\end{figure}

\subsection{Simulation of Two-Way Protocols}
\label{s:simulation}
Let $\mathcal P$ be a two-way protocol, and let $\mathcal{S}(\mathcal{P})$ be any protocol (which could be one-way, omissive, or both). Next we are going to informally define what it means for $\mathcal{S}(\mathcal{P})$ to \emph{simulate} $\mathcal P$ (for a formal definition, refer to~\cite{DiFIISV16}).

We want the set of local states of $\mathcal{S}(\mathcal{P})$ to be of the form $Q_{\mathcal{P}} \times Q_{\mathcal{S}}$, where $ Q_{\mathcal{P}}$ is the set of local states of $\mathcal P$ (the ``simulated states''), and $ Q_{\mathcal{S}}$ is some additional memory space used in the simulation. Suppose now to start an execution of $\mathcal{S}(\mathcal{P})$ on a system of $n>2$ agents from a given initial configuration. Agents are allowed to freely change the $Q_{\mathcal{S}}$ component of their local states; but when they change their $Q_{\mathcal{P}}$ component, we want the change to reflect the transition function of $\mathcal{P}$. That is, if $\delta_{\mathcal{P}}(a_s,a_r) = (f_s(a_s,a_r), f_r(a_s,a_r))$, then for every agent whose simulated state changes from $a_s$ to $f_s(a_s,a_r)$, there must be some other agent (at some point in time) whose simulated state changes from $a_r$ to $f_r(a_s,a_r)$. Moreover, there must be a perfect matching between such transitions, in such a way that each starter of a simulated transition can be implicitly mapped to an appropriate reactor. Also, such a perfect matching must be ``temporally consistent'', i.e., there must be an ordering of the simulated two-way interactions that respects the order of the local state changes of each agent.

We additionally require that, if the execution of $\mathcal{S}(\mathcal{P})$ is globally fair (in the sense defined in~\cite{DiFIISV16}), then also the resulting simulated execution of $\mathcal{P}$ is globally fair.


\begin{figure}[H]
\begin{center}
\includegraphics[scale=0.75]{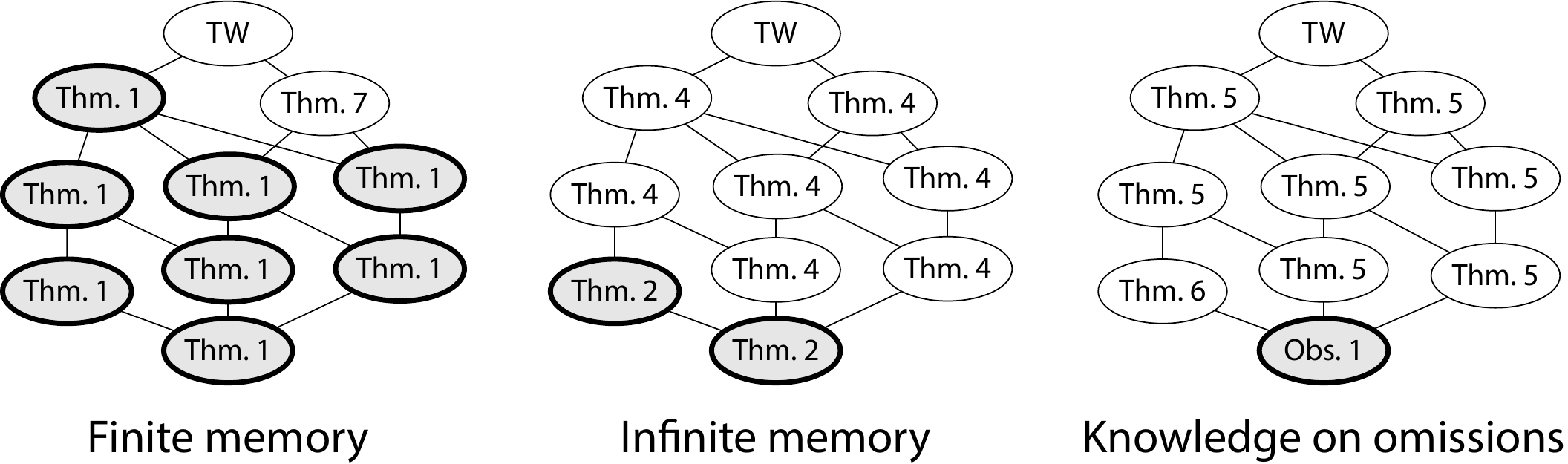}
\end{center}
\caption{Map of results (cf.~Figure~\ref{figure:modcompress})\label{id:algorwq2}. White blobs denote the existence of simulators; gray blobs indicate that simulations are not possible.}
\end{figure}


\section{Simulation with a Leader  in Omissive Models: Impossibility \label{imp1}}

In this section we prove that
the presence of a leader, alone, might not be  sufficient to overcome dynamic omissions. Indeed, we  prove that
there are two-way protocols that cannot be simulated with omissive interactions even if a leader is present.
%
%

Next we consider the {\em Pairing Problem} introduced in~\cite{DiFIISV16}:
a set of agents $A$ is given,  partitioned into  \emph{consumer} agents $A_c $,  starting in state   $c$, 
and \emph{producer} agents  $A_p$, starting in state $p$. 
We say that a protocol $\mathcal P$ solves the \emph{Pairing Problem} if it enforces the following properties:
\begin{enumerate}
\item[(i)] {\em Irrevocability:} $\mathcal P$ has a state $cs$ that only agents in state $c$ can reach; once an agent has state $cs$, its state cannot change any more;
\item[(ii)] {\em Safety:} At any time, the number of agents in  state $cs$ is at most $|A_p|$;
\item[(iii)] {\em Liveness:} In all globally fair executions of $\mathcal P$, eventually the number of agents in  state $cs$ is stably equal to $\min\{|A_c|,|A_p|\}$.
\end{enumerate}

This problem can be solved in the standard two-way model by the  simple protocol below:

\begin{framed}

\footnotesize
{\sf  Pairing  Protocol} $\mathcal{P}_{IP}$.
$Q_{\mathcal{P}_{IP}} = \{cs,c,p,\bot\}$.  The only non-trivial transition rules are $(c,p) \mapsto (cs,\bot)$ and $(p,c) \mapsto (\bot,cs)$.
\end{framed}

\noindent However, as we will show, this protocol  cannot be  simulated, in spite of the presence of a leader,  even in the simplest of the omissive models.

\subsection{Impossibility  with Finite Memory \label{impossleader}}
We investigate what happens when we introduce a distinguished leader node, but  we restrict the memory of agents to be bounded by some function of $|A|$.
We show our impossibility results directly for the {\sf T}$_3$ omissive model. The results clearly carry over to all the less powerful omissive models.


\begin{definition}{\bf ({Omission-Recurrent Configuration})} Let $\{\ell,a\}$ be a system of two agents, where $\ell$ is the leader, and let $C=(q_\ell,q_a)$ be a configuration. Suppose that there exists a finite non-empty sequence of interactions $I=(i_1,i_2,\ldots,i_t)$, where $i_1$ is omissive on both sides, such that, if $I$ is executed according to the transition rules of ${\cal P}$ starting from configuration $C$, eventually the state of $\ell$ is again $q_\ell$. Then, if $\ell$ is the starter (respectively, the reactor) of $i_1$, we say that $C$ is a \emph{starter-omission-recurrent configuration} (respectively, an \emph{reactor-omission-recurrent configuration}) for protocol ${\cal P}$.
\label{def:rec}
\end{definition}
Note that, in the above definition, since $i_1$ is omissive on both sides, the system transitions into configuration $(o(q_\ell),h(q_a))$ or $(h(q_\ell),o(q_a))$ after executing $i_1$.

\begin{lemma} \label{lemma:rec}
Let $\mathcal{S}(\mathcal{P})$ be a simulator having a finite number $k$ of states in total and working under the $\diamond${\bf NO} adversary. Let a system of two agents $\{\ell,a\}$ be given, where $\ell$ is the leader. Let $C_0$ be an initial configuration for $\ell$ and $a$, and let $I=(i_1,i_2,\ldots)$ be an infinite sequence of interactions with no omissions between $\ell$ and $a$ such that, if $\mathcal{S}(\mathcal{P})$ is executed according to $I$ starting from $C_0$, then the execution is globally fair. Then there exists a finite sequence of interactions $I'=(i'_1,i'_2,\ldots\,i'_t)$ with the following properties.
\begin{enumerate}
\item[(1)] $I'$ is obtained by introducing at most $k$ omissive interactions into an initial finite sub-sequence of $I$. All the omissive interactions of $I'$ are omissive on both sides.
\item[(2)] If $\ell$ and $a$ execute $\mathcal{S}(\mathcal{P})$ according to $I'$ starting from $C_0$, they both do a simulated state transition (according to $\delta_{\cal P}$).
\item[(3)] Suppose that $\ell$ and $a$ execute $\mathcal{S}(\mathcal{P})$ according to $I'$ starting from $C_0$, and let $C_j$ be the configuration of $\ell$ and $a$ immediately before executing interaction $i'_{j+1}$. Then, if $i'_{j+1}$ is not omissive and has $\ell$ as the starter (respectively, reactor), $C_j$ is starter-omission-recurrent (respectively, reactor-omission-recurrent) for $\mathcal{S}(\mathcal{P})$.
\end{enumerate}
\end{lemma}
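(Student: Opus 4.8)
The plan is to process the given infinite no-omission sequence $I$ greedily, inserting a both-sided omission only when it is forced by property (3), and then to show that only finitely many such insertions are ever needed. The central object is, for each configuration $C=(q_\ell,q_a)$, the set $\mathrm{Reach}_\ell(C)$ of leader states that $\ell$ can attain along some continuation starting from $C$. First I would record two easy facts: (a) $\mathrm{Reach}_\ell$ is monotone, i.e. $\mathrm{Reach}_\ell(C')\subseteq\mathrm{Reach}_\ell(C)$ whenever $C$ transitions to $C'$ in a single interaction, because $C'$ is reachable from $C$; and (b) if $\mathrm{Reach}_\ell(C)=\{q_\ell\}$ is a singleton, then $C$ is both starter- and reactor-omission-recurrent, since a both-sided omission (in either role) sends $\ell$ to $o(q_\ell)$ or $h(q_\ell)$, which is reachable from $C$ and hence equals $q_\ell$, so $\ell$ returns to $q_\ell$ after that single omission.

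The construction then reads $I=(i_1,i_2,\ldots)$ from left to right, maintaining a pointer to the next interaction to be simulated non-omissively. Before simulating $i_j$ from the current configuration $C$ — say $\ell$ is the starter of $i_j$ — I check whether $C$ is starter-omission-recurrent. If it is, I append $i_j$ to $I'$ as a non-omissive interaction, which immediately secures property (3) for that step, and I advance the pointer. If it is not, I instead append a both-sided omission with $\ell$ again in the starter role, and I do \emph{not} advance the pointer; the reactor case is symmetric. By non-recurrence, every inserted omission is performed at a configuration $C$ with $q_\ell\in\mathrm{Reach}_\ell(C)$ but $q_\ell\notin\mathrm{Reach}_\ell(C')$ for its successor $C'$, so by (a) the set $\mathrm{Reach}_\ell$ strictly shrinks at each insertion. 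Since it is a nonempty subset of the at most $k$ leader states, at most $k$ both-sided omissions are inserted, giving property (1); and, by (b), the greedy never stalls, because once $\mathrm{Reach}_\ell$ becomes a singleton the configuration is recurrent and the pointer advances.

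Because only finitely many omissions are inserted, running the construction without any halting condition produces an infinite execution $\hat I$ that agrees with $I$ on its interaction labels outside a finite prefix and is eventually omission-free. I would then argue that $\hat I$ is globally fair, so that by correctness of $\mathcal{S}(\mathcal{P})$ the induced simulated execution of $\mathcal P$ is globally fair; by liveness of that simulated execution both $\ell$ and $a$ must perform a simulated transition after finitely many steps. Truncating $\hat I$ at the first moment both transitions have occurred yields the finite $I'$ and establishes property (2), while property (3) holds by construction since non-omissive interactions are appended only from recurrent configurations.

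The main obstacle is exactly this global-fairness claim for $\hat I$: inserting omissions perturbs every subsequent configuration, so fairness cannot simply be inherited from $I$. I expect to handle it by showing that the set of configurations occurring infinitely often in $\hat I$ is closed under the one-step transition relation — exploiting that $I$ is globally fair, that both agents therefore act as starter and as reactor infinitely often in $\hat I$ as well, and that the perturbation is confined to a finite prefix — so that $\hat I$ meets the definition of global fairness and the simulator's liveness guarantee applies.
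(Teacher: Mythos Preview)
Your proposal is correct and follows essentially the same approach as the paper: the identical greedy construction is used, and the paper bounds the number of inserted omissions by observing that after an omission at a non-recurrent configuration the leader never revisits that state---your $\mathrm{Reach}_\ell$ potential is a clean repackaging of exactly this observation. The paper dispatches your ``main obstacle'' in a single sentence, simply asserting that inserting finitely many omissions into a globally fair execution preserves global fairness, so your greater caution on that point is the only notable difference.
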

\begin{proof}
First we will insert at most $k$ omissive interactions into $I$ (thus building an infinite sequence that satisfies property~(1)) in such a way as to satisfy property~(3). Then we will choose $t$ so as to satisfy property~(2).

We will construct $I'$ incrementally by an inductive procedure. Suppose we have constructed $I'$ up to $i'_j$, and let $C_{j}$ be the configuration of $\ell$ and $a$ after executing the first $j$ interactions of $I'$ starting from $C_0$ (the base case is with $j=0$). Suppose also that the partial sequence $(i'_1,\ldots,i'_j)$ has been obtained by adding some omissive interactions to some initial sub-sequence of $I$, say $(i_1,\ldots,i_{v(j)})$ (with $v(0)=0$, i.e., in the base case the sub-sequence is empty). Let $i_{v(j)+1}$ have $\ell$ as its starter (respectively, reactor). Then, if $C_j$ is starter-omission-recurrent (respectively, reactor-omission-recurrent), we set $i'_{j+1}=i_{v(j)+1}$ and $v(j+1)=v(j)+1$. Otherwise, we let $i'_{j+1}$ be omissive on both sides with $\ell$ as the starter (respectively, reactor), and we set $v(j+1)=v(j)$.

We claim that, if we continue this process indefinitely, we put at most $k$ omissive interactions in $I'$. Indeed, suppose that an omissive interaction $i'_{j}$ with $\ell$ as the starter (respectively, reactor) has been inserted in $I'$, which means that $C_{j-1}$ is not a starter-omission-recurrent (respectively, reactor-omission-recurrent) configuration. By definition of omission-recurrent configuration, $\ell$ will never get the simulated state it had in $C_{j-1}$ after executing $i'_{j}$. It follows that the same state will contribute to the addition of at most one omissive interaction to $I'$. Since the possible states for $\ell$ are at most $k$, there can be at most $k$ omissive interactions in $I'$.

We now have to decide when to stop the incremental construction of $I'$. Recall that the execution of $I$ starting from $C_0$ is globally fair, and observe that adding finitely many omissive interactions to it preserves its global fairness. So, by definition of simulator, the simulated states of $\ell$ and $a$ also change in a globally fair way as they execute $I'$. Therefore, at some point, they will conclude a simulated interaction, changing their local states according to $\delta_{\cal P}$. At this point we stop the construction of $I'$, obtaining a sequence of finite length $t$ that satisfies all three properties.
\end{proof}

\begin{theorem}\label{th:ledmemfinimp}
A system of agents, each of which has a finite amount of memory, cannot simulate every two-way protocol in the {\sf T$_3$} model (hence in all the \emph{omissive} models), even with the presence of a leader and under the  $\diamond${\bf NO} adversary. 
 \end{theorem}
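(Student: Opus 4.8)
The plan is to prove impossibility by exhibiting a single two-way protocol that admits no simulator under the stated hypotheses; since a simulator must handle \emph{every} two-way protocol, this suffices. The natural candidate is the Pairing Protocol $\mathcal{P}_{IP}$, whose only non-trivial rule is $(c,p)\mapsto(cs,\bot)$ and whose output commitment (state $cs$) is irrevocable. I would argue by contradiction: assume $\mathcal{S}(\mathcal{P}_{IP})$ is a simulator with a finite number $k$ of states that works with a leader under the $\diamond${\bf NO} adversary in the {\sf T$_3$} model.

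First I would invoke Lemma~\ref{lemma:rec} on a two-agent system $\{\ell,a\}$ in which the leader $\ell$ plays the unique producer (initial simulated state $p$) and $a$ plays a consumer (initial simulated state $c$). Starting from any fair, omission-free infinite schedule, the lemma produces a finite schedule $I'$ that (i) inserts at most $k$ both-sided omissions into an initial segment, (ii) forces \emph{both} agents to complete a simulated transition, which can only be the pairing $(c,p)\mapsto(cs,\bot)$, and (iii) guarantees that immediately before every real (non-omissive) interaction involving $\ell$ the current configuration is omission-recurrent for $\ell$. Property~(iii) is the engine of the argument: it certifies that, whenever the leader is about to make real progress, the adversary may instead inject a bounded recurrence sequence that returns $\ell$ to the very same state.

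The core of the proof is then to exploit this recurrence to \emph{reuse} the single leader-producer. Because Safety forbids more than $|A_p|$ agents from ever reaching $cs$, it is enough to build a system with one producer (the leader) and two consumers $a_1,a_2$ and drive both into $cs$. I would run $I'$ so that $a_1$ commits its $c\mapsto cs$ half of the pairing; then, using the omission-recurrence guaranteed just before the interaction at which $\ell$ would commit its $p\mapsto\bot$ half, I would splice in the recurrence sequence to return $\ell$ to a state whose simulated component is still $p$, effectively undoing the producer's side of the pairing while the irrevocable $cs$ of $a_1$ persists. Re-running the construction on $\ell$ together with the second consumer $a_2$ yields a second committed $cs$. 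The outcome is two agents in state $cs$ against a single producer, contradicting Safety; since only finitely many omissions are introduced and finite insertions preserve global fairness, this is a legitimate $\diamond${\bf NO} execution.

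The main obstacle is making the reuse step rigorous. The recurrence restores the \emph{leader's} own state, but the partner advances along the recurrence sequence, so I must ensure (a) that $a_1$'s commitment to $cs$ genuinely precedes, in the simulated order, the point at which $\ell$ is reset---that is, that the consumer's half of the pairing commits strictly before the producer's half---and (b) that after the reset the leader, now carrying auxiliary memory rather than its initial state, can still be driven to pair with $a_2$, which requires reapplying Lemma~\ref{lemma:rec} from the reached configuration rather than from an initial one. Discharging the commit-order subtlety and verifying that these reattachments keep the global schedule fair with only finitely many omissions is where the care lies; the finiteness of memory is precisely what the lemma converts into the bounded, repeatable recurrence that powers the duplication.
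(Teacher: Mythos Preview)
Your setup is correct and you invoke Lemma~\ref{lemma:rec} appropriately, but the ``sequential reuse'' strategy you outline has a genuine gap that you yourself flag as obstacle~(b) and then do not discharge. Omission-recurrence only restores $\ell$ to the \emph{mid-execution} state $q_\ell$ it had at configuration $C_j$; it does not rewind $\ell$ to anything like an initial state. After the reset you are in the configuration $(q_\ell,\text{initial state of }a_2)$, and nothing in the lemma lets you drive $a_2$ to $cs$ from there: $I'$ was tailored to the two-agent system starting from $C_0$, the suffix $(i'_{j+1},\ldots,i'_t)$ presupposes the partner is in the state $a$ had at $C_j$, and Lemma~\ref{lemma:rec} is stated only for initial configurations, so it cannot simply be reapplied. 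Your obstacle~(a) is also real and unaddressed: property~(2) guarantees both agents complete a simulated transition by time $t$, but gives no control over which commits first, and if $\ell$ commits on an omissive step then property~(3) does not even apply at that step.

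The paper's argument avoids both difficulties by a \emph{parallel shadowing} construction rather than sequential reuse. It builds a system with $m=2^t$ extra agents $b_1,\ldots,b_m$ (plus a dummy $d$), all starting in $a$'s state, and maintains the invariant that after processing $i'_j$ a shrinking cohort $a,b_1,\ldots,b_{m/2^j}$ share the same full simulator state. For each non-omissive $i'_{j+1}$ with $\ell$ in state $q_\ell$, the omission-recurrence sequence $I^\ast$ is used to let $\ell$ interact (non-omissively from the partner's side) with each $b_x$ in the surviving half while being reset to $q_\ell$ between them, burning the other half of the cohort as the partners needed to realize $I^\ast$; for omissive $i'_{j+1}$ the cohort is halved by having $d$ play the role of $\ell$. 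At step $t$ one clone $b_1$ is still synchronized with $a$ and hence is also in simulated state $cs$, violating Safety. The recurrence is thus exploited \emph{within a single pass through $I'$}, never requiring a second appeal to Lemma~\ref{lemma:rec} from a non-initial configuration. Your proposal is missing this idea.
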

\begin{proof}
We will show that the Pairing Protocol ${\cal P}_{IP}$ cannot be simulated in {\sf T$_3$}. Suppose by contradiction that there is a simulator $\mathcal{S}(\mathcal{P}_{IP})$ for it, and let us consider a system of two agents $\ell$ and $a$, where $\ell$ is the leader. Let $C_0$ be an initial configuration in which $\ell$ has simulated state $p$ and $a$ has simulated state $c$, and let $I$ be an omission-less infinite sequence of interactions for the two agents whose execution starting from $C_0$ is globally fair. According to ${\cal P}_{IP}$, eventually $I$ will make both $\ell$ and $a$ change simulated state to $\bot$ and $cs$, respectively.

Let us apply Lemma~\ref{lemma:rec} to $\mathcal{S}(\mathcal{P}_{IP})$, $C_0$, and $I$, which yields a sequence of interactions $I'=(i'_1,\ldots,i'_t)$ of length $t$, some of which are omissive on both sides. The sequence $I'$ guarantees that both agents will change simulated state to $cs$ and $\bot$, as per property~(2). Let $C_j$ be the configuration of the two agents after executing the first $j$ interactions of $I$ starting from configuration $C_0$.

Now we construct a larger system of agents: $\{\ell, a, b_1, b_2, \ldots, b_m, d\}$, where $m=2^t$. Let $C'_0$ be the configuration of this system in which $\ell$ and $a$ have the same state as in $C_0$ and all other agents have the same state as $a$. We will show how to modify $I'$ by inserting some extra interactions in it, obtaining an expanded sequence $I''$ that involves also the other members of the system (as opposed to only $\ell$ and $a$). We will then show that executing $I''$ makes the simulator behave in a way that is not compatible with the Pairing Problem.

We will construct $I''$ inductively by inserting a sequence of interactions before each interaction $i'_j\in I'$. Say we have already done so up to $i'_j$, with $0\leq j<t$ (the base case being with $j=0$). Executing $I''$ up to this point from configuration $C'_0$ makes the system reach configuration $C'_j$. Let $w(j)=m/2^j$, and suppose that agents $a$, $b_1$, $b_2$, \ldots, $b_{w(j)}$ have the same state in $C'_j$ (this is certainly true for $j=0$). Also suppose that, in $C'_{j}$, $\ell$ and $a$ have the same state as in $C_{j}$ (again, this is true for $j=0$). As we construct $I''$ and $j$ increases, we will also prove that these properties are preserved.

Now consider the next interaction $i'_{j+1}$, which could be either non-omissive or omissive on both sides. Next we are going to explain what interactions we add to $I''$ between $i'_{j}$ and $i'_{j+1}$.

Suppose that $i'_{j+1}$ is omissive on both sides, and let $\ell$ be the starter and $a$ the reactor. Then we introduce in $I''$ (right after $i'_{j}$) the sequence of interactions $(d,b_1)$, $(d,b_2)$, \ldots, $(d,b_{w(j)/2})$, omissive on both sides. Finally we introduce $i'_{j+1}$ into $I''$. If $a$ is the starter and $\ell$ the reactor, we insert the same interactions, but with starter and reactor exchanged. It is immediate to see that, after executing these interactions from configuration $C'_j$, agents $a$, $b_1$, $b_2$, \ldots, $b_{w(j+1)}$ have equal states. Indeed, they have the same state in $C'_J$, and then they all execute one omissive interaction as starters or as reactors. Moreover, in $C'_{j+1}$, $\ell$ and $a$ have the same state as in $C_{j+1}$.

Suppose that $i'_{j+1}$ is not omissive, and let $\ell$ be the starter and $a$ the reactor. By property~(3) of $I'$, configuration $C_j$ is starter-omission-recurrent. Let $q_\ell$ be the state of $\ell$ in $C_j$ (and therefore in $C'_j$). By definition of starter-omission-recurrent configuration, there exists a sequence of interactions $I^\ast=(i^\ast_1,i^\ast_2,\ldots,i^\ast_z)$, with $i^\ast_1$ omissive on both sides and having $\ell$ as the starter and $a$ as the reactor, such that the state of $\ell$ becomes $q_\ell$ again if $I^\ast$ is executed from configuration $C_j$ (and hence from $C'_j$). Note that the same happens if the partner of $\ell$ in the interactions of $I^\ast$ is not $a$ but any of the $b_x$'s, with $1\leq x\leq w(j)$, since all these agents have the same state in $C'_j$ by inductive hypothesis. Given these premises, we introduce in $I''$ (right after $i'_{j}$) the following interactions.
\begin{itemize}
\item For all $1\leq x\leq w(j)/2$, we insert:
\begin{itemize}
\item the interaction $(\ell,b_x)$, omissive on $\ell$'s side;
\item the interaction $(d,b_{x+w(j)/2})$, omissive on both sides;
\item the sequence of interactions $(i^\ast_2,i^\ast_3,\ldots,i^\ast_z)$, with $b_{x+w(j)/2}$ as $\ell$'s partner instead of $a$.
\end{itemize}
\item Finally, we insert $i'_{j+1}$.
\end{itemize}
The case in which $\ell$ is the reactor of $i'_{j+1}$ and $a$ the starter is handled in a similar fashion, but we exchange starter and reactor in the interactions that we add to $I''$. Suppose now that the system executes the above sequence of interactions starting from configuration $C'_j$, and let us focus on the sub-system consisting of $\ell$ and $b_{x+w(j)/2}$. The agent $\ell$ starts in state $q_\ell$ and then executes an omissive interaction, while $b_{x+w(j)/2}$ executes another omissive interaction. Together, these two interactions have the same effect on $\ell$ and $b_{x+w(j)/2}$ as $i^\ast_1$ (note that the partners of $\ell$ and $b_{x+w(j)/2}$ were irrelevant, because the interactions were omissive on their sides). Then $\ell$ and $b_{x+w(j)/2}$ execute all the interactions of $I^\ast$ except $i^\ast_1$. Since $b_{x+w(j)/2}$ started in the same state as $a$, by definition of $I^\ast$ it follows that the state of $\ell$ is reset again to $q_\ell$ after this sequence. By induction, this is true for all $1\leq x\leq w(j)/2$. Finally, the state of $\ell$ correctly changes according to $i'_{j+1}$ as it interacts with $a$. On the other hand, $a$ and all the $b_x$'s see $\ell$ exactly once when it is in state $q_\ell$ and, since they have the same state in $C'_j$, they also have the same state in $C'_{j+1}$.

We have shown that agents $a$, $b_1$, $b_2$, \ldots, $b_{w(j)}$ have the same state in $C'_j$ for all $1\leq j\leq t$. In particular, for $j=t$, we have that $w(j)=m/2^t=1$, which means that $a$ and $b_1$ have the same state in $C'_t$. In turn, the simulated state of $a$ in $C'_t$ is the same as in $C_t$, i.e., $cs$. Since at the beginning there was only one agent with simulated state $p$ (i.e., $\ell$), and now we have two agents with simulated state $cs$, we have violated the safety property of the Pairing Problem, meaning that $\mathcal{S}(\mathcal{P}_{IP})$ cannot be a simulator.

As there is only a finite number of omissions in $I''$, this sequence of interactions can be extended to an infinite one with the addition of non-omissive interactions, which is compatible with the $\diamond${\bf NO} adversary.
\end{proof}

\subsection{Impossibility  with Infinite Memory  \label{sec:impleaderinfmemory}}
For this case we can show that simulation is impossible in the omissive two-way model without detection, and thus in {\sf IO}. 

\begin{theorem} \label{th:infmemleader}
A system of agents, each of which has an infinite amount of memory, cannot simulate every two-way protocol in the {\sf T$_1$} model (hence in {\sf IO}), even with the presence of a leader and under the  $\diamond${\bf NO} adversary. 
 \end{theorem}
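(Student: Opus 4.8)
The plan is to exploit the defining weakness of {\sf T$_1$}---that a one-sided omission acts as the \emph{identity} on the omitted side, leaving that agent's local state completely unchanged---to ``freeze'' the leader at will during selected interactions. This lets me drive two distinct consumers through exactly the same pairing as the single producer-leader, using only finitely many omissions, so that both reach state $cs$ while the leader produces only once; in contrast to the finite-memory setting of Lemma~\ref{lemma:rec}, the exactness of this freezing dispenses with any recurrence argument, and a single additional consumer suffices. As before, I take the simulated protocol to be the Pairing Protocol $\mathcal{P}_{IP}$.

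First I would fix an arbitrary simulator $\mathcal{S}(\mathcal{P}_{IP})$ and record a \emph{reference run} on the two-agent system $\{\ell,a\}$, with $\ell$ the leader in simulated state $p$ and $a$ in simulated state $c$. Executing $\mathcal{S}(\mathcal{P}_{IP})$ along a globally fair omission-free sequence, the fairness-preservation of the simulator together with the liveness of $\mathcal{P}_{IP}$ (by the same reasoning used in the proof of Lemma~\ref{lemma:rec}) forces the pairing to be simulated, so $a$ reaches $cs$ and $\ell$ reaches $\bot$. Let $I'=(i'_1,\dots,i'_t)$ be the prefix up to that moment, and let $q_0,\dots,q_t$ and $c_0,\dots,c_t$ be the resulting local-state trajectories of $\ell$ and $a$; note that each $i'_j$ is an interaction between $\ell$ and $a$, with one of them as starter.

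Next I would move to the system $\{\ell,a,b\}$, where $a$ and $b$ are consumers starting in identical local states, and build a sequence $I''$ that replaces each $i'_j$ by a two-step gadget preserving the invariant ``$a$ and $b$ hold the same local state, and $\ell$ is in state $q_j$''. If $\ell$ is the starter of $i'_j$, I first schedule $(\ell,b)$ omissive on $\ell$'s side---by the {\sf T$_1$} rule this updates $b$ through $f_r$ exactly as the reference updates $a$, while $\ell$ stays fixed---and then schedule the genuine $i'_j=(\ell,a)$, which updates $a$ identically and advances $\ell$ to $q_j$. If $\ell$ is the reactor of $i'_j$, I use the mirror gadget: a reactor-side omission on $(b,\ell)$ followed by the genuine $(a,\ell)$. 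Because $\ell$'s single update per step reads only its partner's state, which $a$ and $b$ share, the leader reproduces its reference trajectory irrespective of $b$, and the invariant is maintained by induction. After all $t$ steps both $a$ and $b$ sit in $c_t$, hence both in simulated state $cs$, while $\ell$ has carried out its unique $p\to\bot$ production and rests in $\bot$. Two agents now occupy $cs$ although $|A_p|=1$, violating Safety; since $cs$ is irrevocable, the violation is permanent, so $\mathcal{S}(\mathcal{P}_{IP})$ cannot be a simulator. As $I''$ contains only the $t$ inserted omissions, it extends to an infinite globally fair execution by appending omission-free interactions and is compatible with the $\diamond${\bf NO} adversary. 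The case of {\sf IO} is handled by the same scheme, with its own reference run, and needs no omissions at all: there the starter is frozen by the model itself, so at a step with $\ell$ as starter I duplicate the interaction onto $b$, while at a step with $\ell$ as reactor the starter $a$ is frozen by the model, so $a$---and hence $b$---stays put as $\ell$ advances.

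The heart of the argument---and the reason it works for {\sf T$_1$} but not for the detectable omissive models, where infinite memory instead enables simulation---is the claim that the leader is provably \emph{oblivious} to the duplicated consumer. This is exactly what the identity behaviour of {\sf T$_1$} omissions buys: each frozen interaction with $b$ leaves $\ell$'s state exactly as it was, so however much (infinite) memory $\ell$ has, it cannot separate ``I paired once with $a$'' from ``I paired once while $b$ silently paired in lockstep''. The one point demanding care is that $\ell$'s genuine update must depend only on its partner's state; since $a$ and $b$ are kept identical throughout, routing the real interaction through $a$ rather than $b$ is immaterial, and $\ell$'s trajectory---in particular its single production---is reproduced exactly as in the reference run.
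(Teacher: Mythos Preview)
Your proof is correct and follows essentially the same approach as the paper: take an omission-free reference run of the purported simulator $\mathcal{S}(\mathcal{P}_{IP})$ on $\{\ell,a\}$ until $a$ reaches $cs$, then in the enlarged system $\{\ell,a,b\}$ precede each interaction by a copy with $b$ in place of $a$ that is omissive on $\ell$'s side, so that the undetectability of omissions in {\sf T$_1$} keeps $\ell$ on its reference trajectory while $a$ and $b$ evolve identically and both reach $cs$. Your explicit separate treatment of {\sf IO} (duplicating only the $\ell$-as-starter steps and observing that $\ell$-as-reactor steps need no gadget because the starter is frozen by the model) is a nice addition, but otherwise the argument matches the paper's.
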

\begin{proof}
We will show that the Pairing Protocol ${\cal P}_{IP}$ cannot be simulated in {\sf T$_1$}. Let $\mathcal{S}(\mathcal{P}_{IP})$, $\ell$, $a$, $C_0$, and $I=(i_1,i_2,\ldots)$ be defined as in the first paragraph of the proof of Theorem~\ref{th:ledmemfinimp}. By definition of simulator and by the Pairing Problem, if we execute $I$ from $C_0$ according to $\mathcal{S}(\mathcal{P}_{IP})$, at some point we reach a configuration in which $\ell$ has simulated state $\bot$ and $a$ has simulated state $cs$. Say that this happens after executing $i_j$, and let $I_j=(i_1,i_2,\ldots,i_j)$.

Let us now extend the system with a third agent $b$, initially having the same state as $a$ in $C_0$, and let us show how to insert interactions into $I_j$ involving $b$ as well, in order to obtain a contradictory finite sequence of interactions $I'$. Recall that $I$ is omission-less, and consider the interaction $i_x$, with $1\leq x\leq j$. If $i_x=(\ell,a)$, we insert the interaction $(\ell,b)$ right before it, with omission on $\ell$'s side. If $i_x=(a,\ell)$, we insert the interaction $(b,\ell)$ right before it, again with omission on $\ell$'s side.

Since omissions are undetectable, it is easy to see that the extended sequence $I'$ will make $\ell$ undergo the same state transitions as $I_j$ (but at half the ``speed''). On the other hand, $a$ and $b$ will always see $\ell$ in the same state and will never see each other, so they will both have the same state throughout the execution of $I'$. It follows that $a$ and $b$ will eventually have simulated state $cs$, which violates the safety property of the Pairing Problem.

Note that the  sequence $I'$ contains finitely many omissions, and therefore it can be extended to an infinite sequence that is compatible with the $\diamond${\bf NO} adversary.
\end{proof}

\begin{observation}
Since in {\sf IO} there are no omissions, the statement of Theorem~\ref{th:infmemleader} for the {\sf IO} model trivially extends to the scenario in which the number of omissions in the sequence of interactions is known in advance by the agents.
\end{observation}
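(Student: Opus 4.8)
The plan is to observe that, for the {\sf IO} model, the additional hypothesis is vacuous, so the impossibility of Theorem~\ref{th:infmemleader} carries over with essentially no extra work. The governing fact is that the {\sf IO} transition relation $\delta_{\mathcal{P}}(a_s,a_r)=(a_s,f(a_s,a_r))$ admits no omissive branch: {\sf IO} is omission-free by definition. Hence in every {\sf IO} execution the number of omissions is identically $0$, and this is known to all agents a priori. The hypothesis ``the number of omissions is known in advance'' therefore supplies no information that the model does not already guarantee, and the scenario collapses to the ordinary {\sf IO} setting.

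To make this precise I would exhibit the {\sf IO} counterexample to the simulation of $\mathcal{P}_{IP}$ directly, using only genuine {\sf IO} interactions rather than invoking the reduction from {\sf T$_1$}. Starting from the two-agent baseline of Theorem~\ref{th:infmemleader}, in which $\ell$ (a producer) and $a$ (a consumer) are correctly paired so that $a$ reaches simulated state $cs$, I would adjoin a third agent $b$ with the same initial state as $a$, and whenever the baseline has $a$ read $\ell$ as a reactor (an interaction $(\ell,a)$) I would insert the interaction $(\ell,b)$ immediately before it. Because the starter $\ell$ does not change state in {\sf IO}, agents $a$ and $b$ read $\ell$ in identical states; since $a$ and $b$ never interact with one another and $\ell$ never reads $b$, the two agents follow identical state trajectories. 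The presence of the $(\ell,b)$ interactions leaves the trajectory of $\ell$ and $a$ exactly as in the baseline, so $a$ still reaches $cs$ and, by symmetry, so does $b$. This yields two agents in $cs$ against a single producer, violating the \emph{Safety} property (which must hold in \emph{every} execution, fair or not). Crucially, every interaction used here is a native {\sf IO} interaction, and not a single omission is introduced.

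I would then conclude immediately: the refuting execution contains zero omissions, which is exactly the (trivially known) omission count in {\sf IO}; thus granting a simulator advance knowledge of the omission count places it in precisely the same situation, and it is defeated by the same execution. I do not expect any genuine obstacle here, as the statement itself asserts the extension is trivial. The only point deserving a second's attention is verifying that the construction is genuinely omission-free, but this is immediate once one notices that in {\sf IO} the starter's failure to observe its partner is intrinsic to the model rather than an injected fault, so knowing the omission count confers no additional power and the impossibility persists verbatim.
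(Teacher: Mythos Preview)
Your proposal is correct. The paper itself gives no proof for this Observation, treating it as self-evident from the fact that {\sf IO} has no omissions. Your write-up spells out the same idea and, in addition, specializes the construction of Theorem~\ref{th:infmemleader} to {\sf IO} explicitly: since in {\sf IO} the starter never changes state, you can replace each ``$(\ell,b)$ with omission on $\ell$'s side'' by a plain {\sf IO} interaction $(\ell,b)$, and you correctly observe that the $(a,\ell)$ interactions need no companion insertion because $a$ does not change state there. This yields a counterexample execution with literally zero omissions, so advance knowledge of the omission count is vacuous. That is exactly the content the paper compresses into one sentence.
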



\section{Simulation in Omissive Models} \label{sec:6}

In this section we are going to make use of a result that appears in~\cite{DiFIISV16} as Theorem~4.5. This theorem assumes each agent to have a unique ID, which is a non-negative integer, as part of its local state.

\begin{theorem}
Assuming {\sf IO}, unique IDs, and $\mathcal O(\log (\max\, {\rm ID}))$ bits of memory on each agent (where $\max\, {\rm ID}$ is the maximum ID in the system), there exists a simulator for every two-way protocol, even under the {\bf UO} adversary.\qed
\label{th2:simid1}
\end{theorem}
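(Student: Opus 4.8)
The plan is to exhibit an explicit simulator $\mathcal{S}(\mathcal{P})$ and verify the three simulation requirements: correctness of each simulated transition, existence of a temporally consistent perfect matching, and preservation of global fairness. Each agent stores, besides its fixed $\mathrm{ID}$, its simulated state $q\in Q_{\mathcal P}$, a constant-size \emph{mode} flag, and --- only while engaged in a simulated interaction --- a single partner ID together with one simulated state to be delivered. Since an ID occupies $\mathcal O(\log(\max\,\mathrm{ID}))$ bits and everything else is $\mathcal O(1)$, the total memory is within the stated bound. The simulator realizes each two-way transition of $\mathcal P$ as an \emph{atomic transaction} between two agents, implemented by an ID-addressed handshake carried out over several \textsf{IO} interactions.

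For the handshake, a \emph{free} reactor $R$ that reads a \emph{free} starter $T$ issues a request addressed to $\mathrm{ID}(T)$ and switches to a busy mode (\textsc{req}), without touching its simulated state. A free reactor $T$ that reads a starter $R$ whose request is addressed to $\mathrm{ID}(T)$ \emph{accepts}: from the freshly read $q_R$ and its own current $q_T$ it computes $\delta_{\mathcal P}(q_R,q_T)=(q_R',q_T')$, treating $R$ as the simulated starter and itself as the simulated reactor; it updates its own simulated state to $q_T'$, stores $\mathrm{ID}(R)$ together with the value $q_R'$ to be delivered, and becomes busy (\textsc{eng}). When $R$ later reads $T$ and finds a delivery addressed to $\mathrm{ID}(R)$, it adopts $q_R'$ and emits an acknowledgement addressed to $\mathrm{ID}(T)$; upon reading this acknowledgement $T$ returns to free, and so does $R$. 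The opposite role assignment for the same pair is produced by a symmetric transaction in which $T$ is the requester. An omissive interaction conveys nothing to the reactor and, in the undetectable case, leaves it unchanged, hence is a harmless no-op; by global fairness every ordered pair still has infinitely many non-omissive interactions, so each step of the handshake eventually fires and every transaction terminates.

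The point that makes the construction work under the \textsf{IO} asymmetry is that an agent flips itself to busy exactly at the interaction in which it acts as a reactor --- $R$ at request time, $T$ at accept time --- and the evaluation of $\delta_{\mathcal P}$ uses only $q_T$ (the computing agent's own current state) and the $q_R$ read \emph{in the very interaction where the value is computed}; a requester never relies on a stored snapshot of its target, and $q_R$ is valid because $R$ has been busy (hence unchanged and unusable by others) since issuing the request. Consequently no simulated state is ever consumed twice, even though an \textsf{IO} starter may be observed by several agents without noticing it: a second agent that would consume $R$'s or $T$'s state finds it busy and refrains, and a delivery, being addressed to a unique ID, is adopted by exactly one agent. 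Transactions are thus in bijection with simulated two-way interactions, yielding the required perfect matching; and since each agent is busy throughout a whole transaction, its transactions are pairwise non-overlapping, so ordering all transactions by their accept-interaction produces a global order consistent with every agent's local state changes, establishing temporal consistency. Finally, a standard global-fairness argument shows that every ordered pair is simulated infinitely often in each role, so the simulated execution is itself globally fair.

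The crux, and the step I would argue most carefully, is precisely the tension created by the \textsf{IO} model: the observed (starter) agent can neither mark itself busy nor even detect that it was read, so the naive ``lock your partner'' strategy is unavailable and stale reads threaten the matching requirement. The resolution above --- busy flags set by each agent at its own reactor-step, computation on freshly read state only, and ID-addressed deliveries consumed exactly once --- is what must be nailed down; I would also note that no mutual request can arise (to request an agent one must read it free, but a requester is already busy), so the handshake never deadlocks. A secondary point is liveness: using the global-fairness notion of the model one must rule out starvation of a persistently pending request, which I would handle with the standard argument that the configuration in which the acceptor is free and the request is pending recurs infinitely often, forcing the acceptance step to be taken.
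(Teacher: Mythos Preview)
The paper does not prove this theorem: it is quoted from the companion paper~\cite{DiFIISV16} (Theorem~4.5 there), as signalled by the sentence introducing Section~\ref{sec:6} and by the terminal \qed\ with no proof environment. There is therefore no proof in the present paper to compare your attempt against.

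That said, your construction is along sensible lines for an {\sf IO} simulator with IDs. The insight you isolate --- that each agent can only commit itself to a transaction at a step where it is the reactor, and that $\delta_{\mathcal P}$ must be evaluated on the \emph{freshly read} partner state rather than a stored snapshot --- is exactly what is needed to keep the matching sound despite the starter's obliviousness. Two places would need tightening in a full proof. First, your release phase (``$T$ returns to free, and so does $R$'') is underspecified: if $R$ abandons its acknowledgement before $T$ has observed it, $T$ may wait forever under your rule ``upon reading this acknowledgement $T$ returns to free''; the clean fix is to have $T$ release as soon as it observes that $R$ is no longer in the request-to-$T$ mode, rather than insisting on seeing an explicit ack token. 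Second, ``every ordered pair is simulated infinitely often in each role'' is not by itself global fairness of the simulated run; you need the finite-state-space argument (the simulator's configuration space is finite, so some configuration recurs infinitely often, hence by global fairness every configuration reachable from it does, hence every enabled simulated step from any recurring simulated configuration is eventually taken). You gesture at this with ``standard global-fairness argument'', but it should be spelled out.
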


\noindent What this theorem says is that, if the agents initially have unique IDs, they can perform a simulation of any two-way protocol, even if the simulation runs in the weakest model, {\sf IO}, and against the strongest adversary, {\bf UO}.

In this section we assume the presence of a leader and we show that, in certain models, we can implement a \emph{naming algorithm}, i.e., an algorithm that assign unique IDs to all agents. Once an ID has been assigned to an agent, it cannot change. Therefore, the naming algorithm and the simulator of Theorem~\ref{th2:simid1} can be combined into a single protocol and can even run in parallel: if an agent has no ID yet, the simulator simply ignores every interaction involving this agent. By global fairness, eventually all agents will have unique IDs, and the simulation will finally involve the entire system, producing a globally fair simulated execution.

The protocols will be presented using an algorithmic style: for each interaction of the form $(a_s,a_r)$, the starter agent $a_s$ executes function {\sf Upon Event Starter sends}$()$ and the reactor agent $a_r$ executes {\sf Upon Event Reactor delivers}~{$(var^{s})$}, where $var^s$ is the variable $var$ in the local state of agent $a_s$.

\subsection{Naming Algorithm with Infinite Memory}
If the leader has infinite memory, it can implement a simple naming algorithm under certain models.  Since Theorem~\ref{th:infmemleader} already states the impossibility of simulation under models {\sf T$_1$} and {\sf IO}, we will assume model {\sf I$_1$} or model {\sf I$_2$}. Constructing a simulator for these models will imply the existence of a simulator for all other models except {\sf T$_1$} and {\sf IO} (refer to Figure~\ref{id:algorwq2}).

\begin{theorem}
\label{th2:leaderinfmemsim}
Assuming {\sf I$_1$} or {\sf I$_2$}, the presence of a leader, and an infinite amount of memory on each agent, there exists a simulator for every two-way protocol, even under the {\bf UO} adversary.
\end{theorem}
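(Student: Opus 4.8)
The plan is to reduce to Theorem~\ref{th2:simid1} exactly as outlined in the discussion preceding the statement: it suffices to exhibit a \emph{naming algorithm} that, in {\sf I$_1$} or {\sf I$_2$} with a leader and infinite memory, eventually gives every agent a permanent unique integer ID, even under the {\bf UO} adversary. Once such IDs are in place, the ID-based simulator of Theorem~\ref{th2:simid1} runs on top of it while ignoring still-unnamed agents, and the two protocols compose into one whose simulated execution is globally fair. Hence the entire content of the proof is the naming algorithm together with its correctness.

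For the naming algorithm I would let the leader $\ell$ carry an unbounded counter $c$ (this is where infinite memory is used), assign itself the reserved ID $0$, and let every other agent start \emph{unnamed}. The only rule fires when $\ell$ is the \emph{starter}: by the one-way semantics, in a non-omissive interaction the reactor computes $f$ and thereby reads the starter's state, so an unnamed reactor that reads $\ell$ adopts the current value of $c$ as its ID; at the same step $\ell$ applies $g$ and advances $c$ to $c+1$. The point that makes this work in {\sf I$_1$}/{\sf I$_2$} but not in {\sf IO} is that here $g\neq\mathrm{id}$, so the leader's own state reacts to each of its non-omissive starter-interactions and the counter can be advanced. Since in a one-way interaction $\ell$ cannot see the reactor's state — it does not know whether the reactor was already named — the counter is advanced after every offer, unconditionally; this offer-and-increment discipline merely skips some integers but is exactly what secures uniqueness. (Whether $g$ also advances the counter on omissive starter-interactions depends on whether $o=g$ or $o=\mathrm{id}$, but this is harmless, since an omitted reactor reads nothing and receives no ID.) The rule uses only $g\neq\mathrm{id}$ and the fact that a non-omissive reactor reads the starter, both of which hold in {\sf I$_1$} and {\sf I$_2$}, so the same algorithm covers both models.

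Correctness splits into two claims. For \emph{uniqueness}, note that $f$ and $g$ both take the old state of $\ell$ as argument, so a reactor reads the old value of $c$ while $g$ increments it in the same step; distinct non-omissive starter-interactions of $\ell$ therefore offer distinct values, no value is offered twice, and $0$ is reserved for $\ell$, so no two agents ever share an ID. For \emph{liveness}, I would argue that every unnamed agent is eventually the reactor of a non-omissive interaction with $\ell$ as starter, hence is named, so that by global fairness the system eventually reaches a state in which every agent has an ID.

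The step I expect to be the main obstacle is this liveness argument, because of the interaction between the {\bf UO} adversary and infinite memory: since $c$ grows without bound, no full configuration ever recurs, so the ``every one-step successor of a recurrently occurring configuration also recurs'' form of global fairness cannot be invoked directly. I would get around this by projecting onto a finite abstraction: whether a fixed agent $a$ becomes named depends only on the finite information ``$a$ is unnamed, $\ell$ is the leader, and the scheduled roles are ($\ell$ starter, $a$ reactor)'' and not on the numeric value of $c$; under this projection the relevant pattern recurs infinitely often, so global fairness forces its non-omissive outcome — which names $a$ — to occur, contradicting the assumption that $a$ stays unnamed forever. The same abstraction makes transparent why the construction must fail for {\sf IO}: there $g=\mathrm{id}$, so $\ell$ cannot detect its own starter-interactions, its counter can never advance, and every reactor would read one and the same value, defeating uniqueness — in agreement with the impossibility recorded in Theorem~\ref{th:infmemleader}.
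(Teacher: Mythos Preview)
Your proposal is correct and follows essentially the same approach as the paper: the identical leader-counter naming algorithm (leader increments an unbounded counter via $g$ whenever it is the starter; a reactor that sees the leader adopts the current counter value as its permanent ID), composed with the ID-based simulator of Theorem~\ref{th2:simid1}. Your treatment is in fact more careful than the paper's on two points---you explicitly restrict adoption to still-unnamed reactors, and you flag the tension between the configuration-based formulation of global fairness and the unbounded counter, proposing a finite-projection argument---whereas the paper dispatches liveness in a single sentence (``by global fairness, all agents will eventually see the leader in a non-omissive interaction'').
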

\begin{proof}
By the above discussion, it suffices to implement a naming algorithm. Each agent has a local variable $my\_ID$ and the leader also has a second variable $next\_ID$. The leader has $my\_ID=0$ and $next\_ID=1$, while all other agents initially have $my\_ID=\bot$. Every time the leader detects the proximity of another agent (i.e., it applies function $g$ to its own state), it increments the variable $next\_ID$. Every time an agent sees the state of the leader (i.e., it applies function $f$), it sets its own $my\_ID$ to the value found in the leader's $next\_ID$ variable.

Since the leader increments $next\_ID$ every time it is involved in an interaction (even if the interaction is omissive on the other side), no two agents can get the same ID. Moreover, by global fairness, all agents will eventually see the leader in a non-omissive interaction, and will therefore acquire an ID.
\end{proof}

\subsection{Naming Algorithms with Knowledge on Omissions}

Now we assume that agents have only a finite amount of memory, but they know in advance a finite upper bound $L$ on the number of omission faults that the adversary is going to insert \emph{in interactions that involve the leader}. Note that the adversary can still be {\bf UO} even if only finitely many omissive interactions involve the leader.

\subsubsection{Naming Algorithm for {\sf I}$_1$ and {\sf I}$_2$}

We refine the naming algorithm of Theorem~\ref{th2:leaderinfmemsim} to cope with the fact that now memory is bounded by a function of $L$ and the size of the system, $n$. It is worth mentioning that the precise value of $n$ is not known to the agents, and $L$ is only an upper bound on the number of omissions involving the leader, not necessarily the exact number.

\begin{theorem}\label{upbound:1}
Assuming {\sf I$_1$} or {\sf I$_2$}, the presence of a leader, knowledge of an upper bound $L$ on the number of omission failures in interactions that involve the leader, and $\Theta(L \log nL)$ bits of memory on each agent (where $n$ is the number of agents), there exists a simulator for every two-way protocol, even under the {\bf UO} adversary.
\end{theorem}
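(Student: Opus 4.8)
The plan is to reduce the theorem to the construction of a bounded-memory \emph{naming algorithm}, exactly as in the discussion preceding Theorem~\ref{th2:simid1}: once every agent holds a distinct, immutable ID, the $\mathsf{IO}$ simulator of Theorem~\ref{th2:simid1} can be run in parallel (it tolerates the \textbf{UO} adversary and uses $\mathcal{O}(\log(\max\,\mathrm{ID}))$ bits), so it suffices to assign unique IDs from a range of size $\mathcal{O}(nL)$ using $\Theta(L\log nL)$ bits per agent. The starting point is the infinite-memory scheme of Theorem~\ref{th2:leaderinfmemsim}, in which the leader, as starter, applies $g$ on every interaction (hence can count its own interactions even when the reactor suffers an omission) and hands out the value of an ever-growing counter. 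The whole difficulty is to make that counter \emph{bounded}, i.e.\ to reuse ID values without ever assigning the same value to two agents.

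My intended protocol is a \emph{sequential assignment with a handshake}. The leader keeps a variable $next\_ID$, assigns the IDs $1,2,\dots$ in order, and --- crucially --- advances $next\_ID$ only once it has \emph{confirmation} that the current value has been adopted by a genuinely new (previously unnamed) agent. This single rule already keeps $next\_ID\le n$, since it increases exactly once per newly named agent, so the leader never has to detect ``all agents named'' and stop. Each time the leader is a starter it offers a pair $(next\_ID,\sigma)$ with $\sigma$ a fresh sequence number; a still-unnamed reactor adopts the ID tentatively, tagged with $\sigma$, while an already-named reactor records $\sigma$ in order to \emph{decline} it later. Because agents are anonymous and a starter cannot see whom it addresses, all feedback must return to the leader in \emph{separate} interactions, when the agent concerned later plays the starter and the leader the reactor: a tentative agent sends a \emph{claim} for its $\sigma$, a named agent sends a \emph{decline}. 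The leader finalizes the first claim it hears for the current $next\_ID$, advances, and broadcasts the winning $\sigma$; tentative agents carrying a different $\sigma$ read this broadcast, revert to unnamed, and retry for a later ID, whereas on a decline the leader simply re-offers the same ID with a new sequence number.

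The role of $L$ is to make this loop both \emph{live} and \emph{bounded}. The only way an outstanding offer can fail to elicit a claim or a decline is if the offer itself was omitted, and by hypothesis at most $L$ of the leader's interactions are omissive; hence at most $L$ offers can ever be permanently unresolved. I therefore cap the number of offers the leader keeps outstanding at $\Theta(L)$: with at most $L$ of them silently stuck, at least one of the remaining outstanding offers is guaranteed to be resolved eventually, so the cap never deadlocks the leader and progress continues until no unnamed agent is left. The same cap bounds all bookkeeping: the leader stores the $\mathcal{O}(L)$ currently outstanding $(\mathrm{ID},\sigma)$ pairs and draws each $\sigma$ from a recycled pool of size $\Theta(L)$ (a $\sigma$ returns to the pool only when its offer is resolved, so the at-most-$L$ stuck offers shrink the pool by at most $L$ and a fresh $\sigma$ is always available); symmetrically, each agent buffers only the $\mathcal{O}(L)$ sequence numbers of the offers currently made to it. With IDs in $[1,n]$ and sequence numbers in a range of size $\Theta(L)$, every stored quantity fits in $\mathcal{O}(\log nL)$ bits, giving the claimed $\Theta(L\log nL)$ bits per agent.

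Finally I would prove \emph{safety} and \emph{liveness} separately. For safety --- no ID is adopted by two agents --- the key invariant is that the leader re-offers an ID only after an explicit decline of the \emph{exact} sequence number that last carried it, and each sequence number is delivered to exactly one reactor; together with the winner-selection broadcast (which leaves at most one tentative agent holding each value and forces the rest to revert) this rules out double assignment. For liveness, global fairness guarantees that every non-omitted offer is eventually claimed or declined, that every unnamed agent is eventually the reactor of some non-omitted offer, and that every claim, decline and revert eventually reaches its destination; with the $\le L$ bound on stuck offers this shows every agent is named with a distinct ID in finite time, after which Theorem~\ref{th2:simid1} finishes the simulation. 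The \textbf{main obstacle}, deserving the most care, is precisely the interaction of \emph{undetectable} omissions with one-way anonymous communication: the leader can distinguish neither ``adopted'' from ``declined'' from ``silently omitted'' nor one partner from another, so the protocol must re-offer each value often enough to survive up to $L$ omissions yet never create more than $\mathcal{O}(L)$ concurrent duplicates, and it must close the adopt/decline/finalize loop using only what an agent learns by \emph{reading} the leader, all within $\mathcal{O}(L)$ state. Checking that the sequence-number recycling never lets a stale decline retire a freshly reissued value is the most delicate point of the safety argument.
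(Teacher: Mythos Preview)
Your reduction to a naming algorithm followed by the $\mathsf{IO}$ simulator of Theorem~\ref{th2:simid1} is exactly what the paper does. The naming algorithm itself, however, is genuinely different.

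The paper does \emph{not} use tentative adoption, winner selection, or revert. Instead, the leader maintains $L{+}1$ parallel ``channels'': an array $next\_ID[1..L{+}1]$ initialised to $[1,\dots,L{+}1]$, where channel~$j$ issues the values $j,\,j{+}(L{+}1),\,j{+}2(L{+}1),\dots$, so channels never collide. Each time the leader is a starter it \emph{locks} the lowest unlocked channel; the (non-omitted) reactor takes that channel's value as its \emph{permanent} ID immediately. When the leader later sees that agent (as reactor), it unlocks and increments that channel. An omission simply leaves one channel locked forever, and with $L{+}1$ channels at most $L$ can die. The remaining subtlety---an already-named agent observing the leader and thereby tying up a channel---is handled by two small arrays \texttt{redundant[]} (agent side) and \texttt{waiting[]} (leader side) that implement a one-round acknowledgement so the leader can safely unlock. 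Since IDs never exceed $\mathcal{O}(nL)$ and every array has $L{+}1$ entries, the $\Theta(L\log nL)$ bound is immediate.

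Your sequential-with-handshake scheme is plausible in outline, but there is a real gap in the memory argument around the \emph{revert} step. After the leader ``broadcasts the winning~$\sigma$'' and advances $next\_ID$, every tentative loser for the old value must eventually read that broadcast. In $\mathsf{I}_1$/$\mathsf{I}_2$ the leader, as starter, never reads the reactor, so it cannot detect when all losers have reverted; hence it cannot safely forget the winner record. If it keeps all past winners, that is $\Theta(n)$ entries, exceeding your budget when $n\gg L$; if it keeps only $\Theta(L)$, a slow loser may miss its announcement and stay tentative forever (killing liveness) or, worse, later misread a recycled~$\sigma$ as its own winner (killing safety). The very same issue bites your $\sigma$-recycling for declines: the leader can reuse a~$\sigma$ only after the declining agent has cleared it, which again the leader cannot observe directly. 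Both problems are fixable with an explicit acknowledgement loop (loser/decliner clears, then shows the leader it has cleared, then the leader reclaims the slot), but that is precisely the machinery you have deferred to ``the most delicate point'', and it is the crux of the bounded-memory argument. The paper's channel design sidesteps all of this by never letting two agents hold the same value, even tentatively.
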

\begin{proof}
We implement the naming algorithm presented in Figure~\ref{p:naming1}. Compared to the algorithm of Theorem~\ref{th2:leaderinfmemsim}, here the leader has an array of $L+1$ $next\_ID$ variables, as opposed to only one. This array is initialized to $[1,2,\ldots,L+1]$ and, when an ID is assigned, the corresponding entry of the array will be incremented by $L+1$, so that no two equal IDs are ever be generated.

All entries of $next\_ID$ are initially \emph{unlocked}: this information is stored in the leader's Boolean array $locked$. The \emph{active ID} is defined as the unlocked entry of $next\_ID$ having minimum index, if there is any (line~\ref{a1l9}). This is the ID that will tentatively be assigned next. Whenever the leader detects the proximity of another agent (i.e., it executes function $g$ on its own state, or function {\sf Upon Event Starter sends} in the algorithm of Figure~\ref{p:naming1}), it locks the active entry of $next\_ID$ (line~\ref{a1l11}). The purpose of locking an entry of $next\_ID$ (as opposed to just incrementing it as in Theorem~\ref{th2:leaderinfmemsim}) is that the leader cannot allow its value to grow indefinitely, because now memory is limited. Instead, the leader will make the entry temporarily inactive, and will keep it on hold until it gathers more information in the following interactions.

On the other hand, if an agent $a$ sees the leader (i.e., it executes function $f$ or function {\sf Upon Event Reactor delivers} in Figure~\ref{p:naming1}), and $a$ does not have an ID yet, then it assigns itself the active ID from the leader's $next\_ID$ variable (line~\ref{a1l29}). So, the next time the leader sees $a$, it will read its new ID and it will know that the corresponding entry of $next\_ID$ can be unlocked (line~\ref{a1l22}) and its value can be incremented by $L+1$ (line~\ref{a1l23}).

It may happen that the leader is involved in an omissive interaction, and therefore the entry of $next\_ID$ that it locks will never be unlocked again. However, this can happen at most $L$ times, while the array has $L+1$ entries.

This is not sufficient yet, because the same agent $a$ may see the leader multiple times in a row and cause all entries of $next\_ID$ to become locked. If $a$ only stores one ID, it will have no way to tell the leader that more than one entry of $next\_ID$ has to be unlocked. This is why $a$ also has a variable called $redundant$, which is a Boolean array that will store information on all the active entries of $next\_ID$ that $a$ sees after receiving an ID. So, if the agent $a$ already has an ID and it sees the leader again, it sets to $true$ the entry of $reduntant$ corresponding to the active ID of the leader (line~\ref{a1l31}).

Now, suppose that the leader sees that $a$ has an entry of $redundant$ set to $true$. This implies that the corresponding entry of $next\_ID$ is currently locked and should be unlocked. However, this cannot be done right away: the leader wants to give $a$ an ``acknowledgment'', so that $a$ will set the entry of $redundant$ to $false$ first. This is to prevent the scenario in which the entry of $next\_ID$ gets unlocked, becomes active, another agent $b$ sees it, and takes it as its own ID. If then the leader sees $a$ again (still with $redundant$ on $true$), it will unlock the entry of $next\_ID$. Then perhaps yet another agent $c$ will see the leader, getting the same ID as $b$.

To prevent such an incorrect behavior, the leader has another variable array called $waiting$, in which it stores the IDs of the agents that should reset their $redundant$ variables. So, when the leader sees that $a$ has some entry of $redundant$ set to $true$, it stores the (unique) ID of $a$ in the corresponding entry of $waiting$ (line~\ref{a1l17}). Then, when $a$ sees the leader again and reads its own ID in the $waiting$ array, it knows that it has to set to $false$ the corresponding entries of $redundant$ (line~\ref{a1l33}). Finally, when the leader sees $a$ again and notices that the entry of $redundant$ has been set to $false$, it can reset the corresponding entry of $waiting$ (line~\ref{a1l19}) and unlock the entry of $new\_ID$ (line~\ref{a1l20}).

The fact that the algorithm does not give the same ID to two different agents follows from the observation that at most one agent can keep an entry of $new\_ID$ locked at any given time, which in turn follows from the way the two variables $redundant$ and $waiting$ function together. If no omission occurs and the leader is observed by some agent $a$, then $a$ will store information about the currently active ID. If $a$ takes this ID for itself, that entry of $next\_ID$ will be incremented before any other agent can get the same ID. If $a$ has already an ID, the entry of $next\_ID$ will remain locked until $a$ has reset its own $redundant$ variable. Moreover, the fact that the algorithm will eventually assign every agent an ID immediately follows from the global fairness of the adversarial scheduler.

Since the IDs in the $next\_ID$ array increase by $L+1$ every time one is assigned, and since there are $n$ agents in total, the value of every ID is $\mathcal O(nL)$. Hence, $\mathcal O(L\log nL)$ bits of memory are required to store each agent's arrays, and $\mathcal O(\log nL)$ more bits are required to run the simulator of Theorem~\ref{th2:simid1}. The total amount of memory needed per agent is therefore $\mathcal O(L\log nL)$ bits.
\end{proof}

\begin{figure}
\scriptsize
\begin{framed}
\begin{algorithmic}[1]

\State {\sf Variables}
\State $my\_ID$\Comment{the leader has this variable initialized to $0$, non-leaders to $\bot$} 
\State $next\_ID[]:=[1,2,\ldots,L+1]$ 		      \Comment{leader variable} 
\State $locked[]:=[false,false,\ldots,false]$   \Comment{leader variable} 
\State $waiting[]:=[\bot,\bot,\ldots,\bot]$   \Comment{leader variable} 
\State $redundant[]:=[false,false,\ldots,false]$ \Comment{non-leader variable} 

\\
\State {\sf Upon Event Starter sends}() \label{a1l7}
\If{$my\_ID=0$} \Comment{I am the leader} \label{a1l8}
\State $j:=\min\{j\mid locked[j]=false, L+2\}$\label{a1l9}
\If{$j < L+2$} \label{a1l10}
\State  $locked[j]:=true$ \label{a1l11}
\EndIf
\EndIf

\\
\State {\sf Upon Event Reactor delivers} {($my\_ID^{s}, next\_ID^{s}[], locked^{s}[], waiting^{s}[], redundant^{s}[]$)}  \label{a1l13}
\If{$my\_ID=0$} \Comment{I am the leader} \label{a1l14}
\ForAll{$j \in \{1,2,\ldots,L+1\}$}\label{a1l15}
\If{$redundant^{s}[j]=true$}\label{a1l16}
\State  $waiting[j]:=my\_ID^{s}$\label{a1l17}
\ElsIf{$waiting[j]=my\_ID^{s}$}\label{a1l18}
\State $waiting[j] := \bot$ \label{a1l19}
\State $locked[j]:=false$\label{a1l20}
\EndIf
\EndFor
\If{$\exists j, next\_ID[j]=my\_ID^{s}$}\label{a1l21}
\State $locked[j]:=false$\label{a1l22}
\State $next\_ID[j]:=next\_ID[j]+L+1$ \label{a1l23}
\EndIf

\Else  \Comment{I am not the leader}\label{a1l24}
\If{$my\_ID^{s}=0$} \Comment{my partner is the leader}\label{a1l25}
\State $j=\min\{j \mid locked^{s}[j]=false, L+2\}$\label{a1l26}
\If{$j < L+2$} \label{a1l27}
\If{$my\_ID=\bot$}\label{a1l28}
\State $my\_ID:=next\_ID^{s}[j]$ \label{a1l29}
\Else\label{a1l30}
\State $redundant[j]:=true$ \label{a1l31}
\EndIf
\EndIf
\If{$my\_ID \neq \bot\, \land\, \exists j, waiting^{s}[j] = my\_ID$}\label{a1l32}
\State $redundant[j]:=false$\label{a1l33}
\EndIf
\EndIf
\EndIf

\end{algorithmic}
\caption{Naming algorithm for {\sf I}$_1$ and {\sf I}$_2$ with knowledge on omissions, used in Theorem~\ref{upbound:1} \label{p:naming1}}
\end{framed}
\vspace{0.5cm}
\begin{framed}
\begin{algorithmic}[1]

\State {\sf Variables}
\State $role$ \Comment{the leader has this variable initialized to $leader$, non-leaders to $available$} 
\State $state_{\mathcal{P}}:= initial\_state_{\mathcal{P}}$
\State $token:=\bot$

\\
\State {\sf Upon Event Starter sends}()\label{a2l6}
\State $token:=\bot$\label{a2l7}
\If{$role=leader$}\label{a2l8}
	\State $role:=available$\label{a2l9}
\ElsIf{$role=moving$}\label{a2l10}
	\State $role:=available$\label{a2l11}
\ElsIf{$role=starter$}\label{a2l12}
	\State $role:=pending$\label{a2l13}
\EndIf

\\
\State {\sf Upon Event Reactor delivers} {$(role^s,state^{s}_{\mathcal{P}},token^{s})$} \label{a2l15}
\State $token:=token^{s}$\label{a2l16}
\If{$role^s= leader$}\label{a2l17}
    \State $role:=moving$\label{a2l18}
\ElsIf{$role^s= moving$}\label{a2l19}
    \State $role:=starter$\label{a2l20}
\ElsIf{$role^s=starter$}\label{a2l21}
	\State $token:=state_{\mathcal{P}}$\label{a2l22}
    \State $state_{\mathcal{P}}:=f_r(state^{s}_{\mathcal{P}}, state_{\mathcal{P}})$\label{a2l23}
\ElsIf{$role= pending \, \wedge \, token \neq \bot$}\label{a2l24}
	\State $state_{\mathcal{P}}:=f_s(state_{\mathcal{P}},token)$\label{a2l25}
	\State $role:=leader$\label{a2l26}
	\State $token:=\bot$\label{a2l27}
\EndIf

\end{algorithmic}
\caption{Simulation protocol for {\sf IT} with finite memory, used in Theorem~\ref{leader:token} \label{p:weakalgorithm}}
\end{framed}

\restoregeometry
\end{figure}

\subsubsection{Naming Algorithm for {\sf T}$_1$}

Observe that the previous naming algorithm does not work for model {\sf T}$_1$, and Theorem~\ref{th:infmemleader} does not hold when some kind of upper bound on omissions is known.

\begin{theorem}\label{upbound:2}
Assuming {\sf T$_1$}, the presence of a leader, knowledge of an upper bound $L$ on the number of omission failures in interactions that involve the leader, and $\Theta(L \log nL)$ bits of memory on each agent (where $n$ is the number of agents), there exists a simulator for every two-way protocol, even under the {\bf UO} adversary.
\end{theorem}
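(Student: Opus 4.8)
The plan is to follow the same reduction used throughout Section~\ref{sec:6}: rather than simulate directly, I would build a \emph{naming algorithm} that eventually assigns distinct, immutable IDs to all agents, and then run the simulator of Theorem~\ref{th2:simid1} in parallel, letting it ignore any agent that is still unnamed. Running that simulator inside {\sf T$_1$} is legitimate: if every interaction is interpreted as an immediate observation (the reactor reads the starter, the starter discards what it reads), then the only failures the simulator ever perceives are reactor-side omissions, i.e.\ an omissive-{\sf IO} execution, which Theorem~\ref{th2:simid1} already tolerates under {\bf UO}. Moreover the naming phase depends only on interactions that involve the leader, and by hypothesis at most $L$ of these are omissive, so the unbounded omissions among non-leaders are harmless to it.

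The substantive work is the naming algorithm, and the obstacle is precisely what makes {\sf T$_1$} different from {\sf I$_1$}/{\sf I$_2$}. In Figure~\ref{p:naming1} the leader locks the active entry of $next\_ID$ the instant it plays the starter, which is sound because in {\sf I$_1$}/{\sf I$_2$} the starter always applies $g$ and therefore \emph{detects} the interaction even when it is omissive. In {\sf T$_1$} omissions are undetectable on both sides, so the leader cannot tell when it has just participated in an interaction on its faulty side. The dangerous event is a \emph{silent grab}: a fresh agent reads the leader's active ID while the leader is the starter of a starter-side omission (or, symmetrically, the reactor of a reactor-side omission); the agent takes the value but the leader never learns it did, and could later hand the same value to a second agent. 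This is exactly the symmetry exploited by the impossibility of Theorem~\ref{th:infmemleader}, where one omission per step let the adversary breed an indistinguishable twin; the whole point of the bound $L$ is that it caps the number of silent grabs, which is what reopens the door to a solution.

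To cope with this I would exploit two-wayness: on a \emph{non-omissive} interaction the leader reads its partner regardless of role, so it can reliably lock and retire a slot then. I keep the $L+1$-entry array of Figure~\ref{p:naming1}, with values spaced by $L+1$ and retired by adding $L+1$, but I weaken the assignment: an agent first takes the active value only \emph{tentatively}, and a value is \emph{sealed} to it only through a non-omissive two-way interaction in which the leader observes a tentative holder, retires that slot, and, through a $waiting$/$redundant$-style handshake, singles out one holder; any tentative holder whose stored value no longer matches the current slot value is rejected and reverts to unnamed. Then the two properties I must establish are \textbf{uniqueness} -- each slot value is sealed to at most one agent, because the leader retires it upon the first confirmation and rejects stale duplicates by value-matching -- and \textbf{progress}, for which I would argue that each of the at most $L$ leader-involving omissions can strand at most one slot, so at least one of the $L+1$ slots keeps cycling productively and, by global fairness, every agent is eventually named. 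The memory accounting is then immediate: the slot values are $\mathcal O(nL)$, giving $\mathcal O(L\log nL)$ bits for the leader's arrays, plus the $\mathcal O(\log nL)$ bits of the simulator of Theorem~\ref{th2:simid1}, for a total of $\Theta(L\log nL)$.

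The hard part will be the uniqueness proof, because it must be carried out against the combination of anonymity and undetectable omissions: the leader cannot address a confirmation to a particular agent but only to ``the holder of value $c$'', and several agents may silently be holding the same $c$ at once. The crux is therefore to pin down an invariant -- that at most one agent ever seals a given value -- and to show it is preserved by every interaction of the {\bf UO} adversary subject to the $\le L$ leader-omission budget, taking care that a rejected duplicate can never itself seal a stale value. Everything else (global-fairness-based liveness and the memory count) is routine once this invariant is in place.
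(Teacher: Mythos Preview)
Your reduction to a naming algorithm plus the {\sf IO} simulator of Theorem~\ref{th2:simid1} matches the paper, but the naming algorithm you propose is far more elaborate than what the paper actually does, and the step you yourself flag as ``the hard part'' is precisely the one the paper's approach avoids altogether.

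The paper's naming algorithm for {\sf T$_1$} discards the slot-and-handshake machinery of Figure~\ref{p:naming1} entirely. The leader keeps a single counter $next\_ID$ (initially~$1$); every non-leader keeps an \emph{array} $my\_ID[1..L{+}1]$ initialised to~$\bot$. Whenever a non-leader sees the leader and still has a $\bot$ entry, it copies $next\_ID$ into that entry; whenever the leader sees a non-leader with a $\bot$ entry, it increments $next\_ID$. An agent's ID is the whole array once it is full. Uniqueness is then a one-line pigeonhole argument: if two agents ended up with identical arrays, then for each of the $L{+}1$ coordinates the earlier of the two observations of that value must have been omissive on the leader's side (otherwise $next\_ID$ would have advanced before the later observation), yielding $L{+}1$ distinct leader-involving omissions, contradicting the bound. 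No confirmation, no sealing, no handshake---and therefore no need ever to address an agent that does not yet have a name. The counter stays $\mathcal O(nL)$, so each array takes $\Theta(L\log nL)$ bits.

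Your scheme, by contrast, insists on \emph{scalar} IDs and then tries to rule out duplicates via a confirmation handshake. You correctly identify the circularity this creates: the $waiting$/$redundant$ mechanism of Figure~\ref{p:naming1} addresses agents by their ID, but here the agents you must disambiguate are precisely those without one, and several of them may silently hold the same tentative value~$c$. You state this as ``the crux'' and leave the invariant unproved. It is not clear the gap can be closed within your design: any signal the leader emits to ``the holder of~$c$'' is read identically by every silent duplicate, and with undetectable omissions the leader cannot tell which, or how many, of them acted on it. The paper's vector-ID trick sidesteps the whole difficulty by letting the ID itself absorb the~$L$ omissions rather than trying to detect or repair them.
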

\begin{proof}
It is sufficient to give a naming algorithm. We modify the one used in Theorem~\ref{th2:leaderinfmemsim} to work in {\sf T$_1$} with ${\cal O}(L \log nL)$ memory. The leader has the same two local variables, but each other agent has an array $my\_ID$ of $L+1$ local variables, each of which is initially set to $\bot$. If an agent $a$ sees the leader, and the local array $my\_ID$ of $a$ has some entries still set to $\bot$, then $a$ changes one of them from $\bot$ to the value of the leader's $next\_ID$ variable. On the other hand, if the leader sees an agent whose local array $my\_ID$ has some entries still set to $\bot$, it increments $next\_ID$. When all entries of an agent's array $my\_ID$ have been set, the entire array is taken as the agent's ID.

Since the execution is globally fair, eventually all agents will have their $my\_ID$ array completely set. Observe that, whenever the leader increments $next\_ID$, there is an agent $a$ that removes one occurrence of $\bot$ from its local array $my\_ID$, unless the interaction is omissive on $a$'s side. But there can only be $L$ such omissive interactions, which means that the maximum value of $next\_ID$ will be $\mathcal O(nL)$. So, $\Theta(L\log nL)$ bits of memory are enough for an agent to store its local array $my\_ID$. By Theorem~\ref{th2:simid1}, combining this naming algorithm with the simulator does not require more than $\Theta(L\log nL)$ bits of memory on each agent.

Suppose for a contradiction that two agents receive equal IDs. Therefore, both agents have observed the leader $L+1$ times, and at the $j$th observation both agents must have read the same value in variable $next\_ID$, for all $1\leq j\leq L+1$. So, the leader has failed to increment $next\_ID$ for at least $L+1$ times, implying that there have been $L+1$ omissive interactions involving the leader, which contradicts the theorem's assumptions. Thus all agents receive distinct IDs, and the naming algorithm is correct.
\end{proof}

\section{Simulation for {\sf IT} \label{sec:5}}

Notice that {\sf IT} is the only finite-memory model for which the impossibility result of Theorem~\ref{th:ledmemfinimp} does not hold (see Figure~\ref{id:algorwq2}). It turns out that in this model we can implement a simulator that sequentializes the simulated two-way interactions via a token-passing technique.

\begin{theorem}
Assuming {\sf IT}, the presence of a leader, and a constant amount of memory on each agent, there exists a simulator for every two-way protocol, even under the {\bf UO} adversary. \label{leader:token}
\end{theorem}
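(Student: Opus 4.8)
The plan is to build a simulator for \textsf{IT} that enforces a strict \emph{sequentialization} of the simulated two-way interactions by circulating a single ``simulation token'' that originates at the leader. The core difficulty in \textsf{IT} is that the starter applies only $g$ to its own state and never reads the reactor, so a single interaction cannot carry out a two-way exchange $\delta_{\mathcal P}(a_s,a_r)=(f_s(a_s,a_r),f_r(a_s,a_r))$ atomically: the agent acting as simulated starter cannot learn the reactor's simulated state in one shot. The idea is therefore to implement each simulated two-way interaction using a short, leader-coordinated chain of \textsf{IT} interactions that shuttles the needed state value (as the $token$ variable of Figure~\ref{p:weakalgorithm}) from the simulated reactor back to the simulated starter, while the $role$ variable tracks each agent's position in the current simulation round. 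Because only one token is ever in play at a time (it is minted by the leader and consumed when a simulated interaction completes), at most one two-way interaction is ever ``in progress'', which is exactly what gives a temporally consistent perfect matching between simulated starter-transitions and reactor-transitions as required by the definition of simulator in Section~\ref{s:simulation}.

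First I would trace the intended life-cycle of one simulation round through the $role$ states $\{leader, available, moving, starter, pending\}$ appearing in Figure~\ref{p:weakalgorithm}. The leader initiates a round: when the leader is the \textsf{IT}-starter it demotes itself to $available$ (lines~\ref{a2l8}--\ref{a2l9}), and when an $available$ agent receives from a $leader$ it becomes $moving$ (lines~\ref{a2l17}--\ref{a2l18}); a further hop turns a $moving$ agent that is seen by another into a $starter$ (lines~\ref{a2l19}--\ref{a2l20}). The designated simulated \emph{reactor} is the agent that, as \textsf{IT}-reactor, receives from a $starter$: it saves its own simulated state into the token (line~\ref{a2l22}) and applies $f_r$ to obtain its post-transition simulated state (line~\ref{a2l23}); meanwhile the $starter$ agent, upon being the \textsf{IT}-starter again, moves to $pending$ (lines~\ref{a2l12}--\ref{a2l13}) while clearing its own token (line~\ref{a2l7}). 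Finally the token carrying the reactor's old simulated state reaches the $pending$ agent, which applies $f_s$ to complete its own simulated-starter transition, clears the token, and reverts to $leader$ (lines~\ref{a2l24}--\ref{a2l27}), re-arming the system for the next round. I would verify, state by state, that these transitions never change an agent's $state_{\mathcal P}$ component except at the two matched moments (reactor at line~\ref{a2l23}, starter at line~\ref{a2l25}), so every simulated state change genuinely reflects $\delta_{\mathcal P}$ and the starter/reactor pair is matched.

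The correctness argument then splits into a \textbf{safety} part and a \textbf{liveness} part. For safety I would establish an invariant that exactly one ``active simulation object'' exists at any time: equivalently, that the multiset of $role$ values plus non-$\bot$ tokens always encodes a single round in some phase, so that the implicit matching between simulated starters and reactors is a genuine perfect matching and is temporally consistent (the reactor's transition always precedes the matched starter's completion along the token's path). Crucially I must argue robustness against the \textbf{UO} adversary: an omissive \textsf{IT} interaction must never create a second token or split one round into two, and must never let an agent complete a simulated transition with stale or duplicated data. The mechanism that guarantees this is that the token is always reset to $\bot$ by the \textsf{IT}-starter at the very start of its own next interaction (line~\ref{a2l7}), and a $pending$ agent only fires $f_s$ when $token\neq\bot$ (line~\ref{a2l24}); so an omission merely aborts the current round (the token is lost, the agent reverts harmlessly toward $available$), and no spurious simulated transition is produced. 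For liveness I would invoke global fairness: since an aborted round leaves the leader (or a reverting agent) able to restart, and every ordered pair of agents is scheduled infinitely often with infinitely many omission-free interactions, every pending simulated two-way interaction the protocol $\mathcal P$ would want to perform is eventually realized, and the resulting simulated execution is itself globally fair, as required.

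The main obstacle I expect is precisely the interaction between the \textbf{UO} adversary and the multi-step token relay: I must rule out every way an adversarially-placed omission could corrupt the single-token invariant — for instance by dropping the token mid-transit so that a $pending$ agent or a half-updated reactor is left in an inconsistent state, or by causing two agents to believe they are the reactor of the same round. The cleanest way to discharge this is to define a rigorous round-invariant (tracking which agents hold role $\neq available$ together with token ownership), show it is preserved by each of the four branches of the \textsf{IT} omissive relation $\delta(a_s,a_r)=\{(g(a_s),f(a_s,a_r)),\,(o(a_s),h(a_r))\}$, and in particular verify that the omissive branch always collapses the round back to a safe ``no round in progress'' configuration without emitting a simulated transition. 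Once that case analysis is complete, safety is immediate and liveness follows from fairness as above.
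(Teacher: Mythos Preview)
Your core approach---single-token sequentialization via the $role$/$token$ machinery of Figure~\ref{p:weakalgorithm}, with safety coming from uniqueness of the active round and liveness from global fairness---is exactly the paper's argument. The life-cycle you trace matches the paper's proof essentially line for line.

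However, you have misread the model: \textsf{IT} is \emph{not} an omissive model. Its transition relation is the single pair $\delta(a_s,a_r)=(g(a_s),f(a_s,a_r))$; there is no omissive branch $(o(a_s),h(a_r))$. The omissive one-way models are \textsf{I}$_1$--\textsf{I}$_4$, and for all of those Theorem~\ref{th:ledmemfinimp} already rules out finite-memory simulation. The clause ``even under the \textbf{UO} adversary'' here concerns only adversarial scheduling; in \textsf{IT} the adversary has no omission to inject, and the paper's proof accordingly never mentions omissions.

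Your proposed omission analysis is therefore unnecessary, and the invariant you sketch would in fact \emph{fail} if one tried to run Figure~\ref{p:weakalgorithm} in an omissive one-way model. Contrary to your claim that an omission ``collapses the round back to a safe configuration without emitting a simulated transition'', note that the reactor $c$ applies $f_r$ immediately at line~\ref{a2l23} upon seeing a $starter$; if a later omission then destroys the token in transit, the $pending$ agent never executes $f_s$, leaving $c$'s simulated transition permanently unmatched---a direct violation of the simulator definition. Likewise, an omission on the reactor's side when the $leader$ demotes itself (lines~\ref{a2l8}--\ref{a2l9}) would lose the leadership outright. So the ``main obstacle'' you anticipate does not exist for this theorem; drop that case analysis, and what remains is the paper's proof.
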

\begin{proof}
The simulation algorithm is reported in Figure~\ref{p:weakalgorithm}. Suppose we are given a two-way protocol $\mathcal P$ whose transition function is $\delta_\mathcal P(a_s,a_r)=(f_s(a_s,a_r),f_r(a_s,a_r))$. In our simulator, each agent has a local variable called $state_\mathcal P$, which is the state of $\mathcal P$ that the agent is simulating, plus an auxiliary variable $role$, which is used to coordinate the simulation. Initially, the role of one agent is $leader$, while all others are $available$. When the leader meets another agent $a$, the leadership is ``transferred'' to $a$: the role of the leader becomes $available$ (line~\ref{a2l9}), and the role of $a$ becomes $moving$ (line~\ref{a2l18}). Note that the leader does not have to see the state of $a$ to perform this operation.

The next agent $b$ that sees $a$ becomes the starter of a new simulated interaction: the role of $b$ becomes $starter$ (line~\ref{a2l20}) and the role of $a$ becomes $available$ again (line~\ref{a2l11}). Now, the first agent $c$ that observes $b$ becomes the reactor of the simulated interaction: it executes function $f_r$ on its own simulated state using $b$'s simulated state as part of the input (line~\ref{a2l23}), while $b$'s role becomes $pending$ (line~\ref{a2l13}).

Now, in order for $b$ to perform its side of the simulated transition, it has to retrieve the simulated state that $c$ had before transitioning. To deliver this information to $b$, the agent $c$ stores its own simulated state in a variable called $token$ before performing the transition (line~\ref{a2l22}). Now, as soon as an agent sees $c$, it copies the token (line~\ref{a2l16}), while $c$ erases its own copy (line~\ref{a2l7}). This token circulation protocol is executed until the token reaches $b$.

When $b$ finally obtains the token, it uses it as part of the input to function $f_s$ and changes its simulated state accordingly (line~\ref{a2l25}). Now both sides of the simulated transition have been performed correctly, and $b$ resets its role to $leader$ (line~\ref{a2l26}) and destroys the token (line~\ref{a2l27}). At this point we have exactly one agent whose role is $leader$, while all other agents have role $available$, as we had at the beginning. The next steps of the simulation are thus performed in the same fashion.

Note that, if the two-way protocol $\mathcal P$ has a constant number $k$ of states, then our simulator has $\mathcal O(k^2)$ states, independently of the size of the system.

The correctness of the simulator can be proven by observing that, due to the uniqueness of the leader, there is at most one pending transition at all times. Moreover, any agent in the system can become the starter (including the leader itself), thanks to the extra step that creates an agent with role $moving$: any non-leader agent can become $moving$, and then any agent (including the original leader) can become $starter$. Note that this is not true if the system consists of $n=2$ agents (in this case the leader will necessarily become the starter of every simulated interaction), but recall that the definition of simulator assumes that $n>2$ (cf. Section~\ref{s:simulation}).

We have to prove that the system will perform infinitely many simulated interactions, in such a way that the simulated execution is globally fair. By the global fairness of the simulator, a token will certainly be created and will be passed around by the agents; again by global fairness, the token will eventually reach the agent with $role=starter$, and the simulated interaction will be concluded. The global fairness of the entire simulated execution also follows immediately from the global fairness of the simulator itself.
\end{proof}

\bibliographystyle{plain}

\end{document}